\ifnum\pdfstrcmp{\jobname}{main}=0
\def\anonymize{0}
\def\release{1}
\def\shrink{1}
\def\diagram{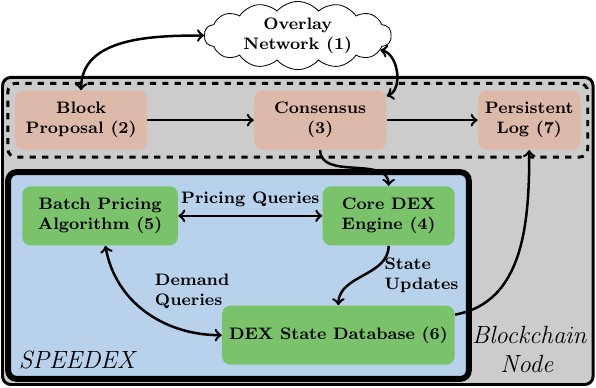}
\ifnum\pdfstrcmp{\jobname}{anonymized}=0
\def\anonymize{1}
\def\release{1}
\def\shrink{1}
\def\diagram{system_diagram_anonymized.pdf}
\ifnum\pdfstrcmp{\jobname}{release}=0
\def\anonymize{0}
\def\release{1}
\def\shrink{0}
\def\diagram{system_diagram.pdf}
\newcommand{\Tat}{T{\^a}tonnement}
\newcommand\asset[1]{\mathcal{#1}}
\newcommand{\assetset}{\mathfrak{A}}
\newcommand{\assetsize}{\vert \mathfrak{A}\vert}
\newcommand\XXX[1]{}
\newcommand\XXX[1]{\begingroup \bfseries\color{red} #1\endgroup}
\newcommand{\SPEEDEX}{GeckoDEX}
\newcommand{\stellar}{Blockchain X}
\newcommand{\thestellarbc}{Blockchain X}
\newcommand{\Thestellarbc}{Blockchain X}
\newcommand{\speedexgithub}{<redacted>}
\newcommand{\SPEEDEX}{SPEEDEX\@}
\newcommand{\stellar}{Stellar}
\newcommand{\thestellarbc}{the Stellar blockchain}
\newcommand{\Thestellarbc}{The Stellar blockchain}
\newcommand{\speedexgithub}{\url{https://github.com/scslab/speedex}}
\newtheorem{theorem}{Theorem}
\newtheorem{corollary}{Corollary}
\newtheorem{definition}{Definition}
\newtheorem{example}{Example}
\DeclareMathOperator*{\argmax}{arg\,max}
\begin{document}

\if\shrink1
\captionsetup[figure]{labelfont={bf},name={Fig.},labelsep=period,textfont=it,belowskip=-15pt,skip=-1pt}
\else
\captionsetup[figure]{labelfont={bf},name={Fig.},labelsep=period,textfont=it}
\fi

\date{} 

\title{\Large \bf \SPEEDEX{}:\\
A Scalable, Parallelizable, and Economically Efficient Decentralized EXchange
}


\if\anonymize1
\author{Submission 224}
\else

\author{
{\rm Geoffrey Ramseyer}\\
Stanford University
\and
{\rm Ashish Goel}\\
Stanford University
\and
{\rm David Mazières}\\
Stanford University
} 

\fi


\maketitle

\begin{abstract}
\vspace{5pt}
\SPEEDEX{}
is a decentralized exchange (DEX) that lets participants
securely trade assets without giving any single party undue control
over the market.  \SPEEDEX{} offers several advantages over prior DEXes.
It achieves high throughput---over 200,000 transactions per second on
48-core servers, even with tens of millions of open offers.  
\SPEEDEX{} runs entirely within a Layer-1 blockchain, and thus achieves 
its scalability without fragmenting market liquidity between multiple blockchains or rollups.  
It eliminates internal
arbitrage opportunities, so that a direct trade from asset $\asset{A}$ to
asset $\asset{B}$
always receives as good a price as trading through some third asset
such as USD\@.  Finally, it prevents certain front-running attacks that would
otherwise increase the effective bid-ask spread for small traders.
\SPEEDEX{}'s key design insight is its use of an Arrow-Debreu exchange market
structure that fixes the valuation of assets for all trades in a given
block of transactions.  
We construct an algorithm, which is both asymptotically efficient and empirically practical,
that computes these 
valuations while exactly preserving a DEX's financial correctness constraints.
Not only does this market structure provide
fairness across trades, but it also makes trade operations commutative and
hence efficiently parallelizable.  \SPEEDEX{} is prototyped but not yet merged
within the Stellar blockchain, one of the largest Layer-1 blockchains.
\XXX{This needs to be specific for the camera-ready}

\end{abstract}

\section{Introduction}

Digital currencies are moving closer to mainstream adoption.  Examples
include central bank digital currencies (CBDCs) such as China's DC/EP \cite{dcep},
commercial efforts \cite{helvetia,stellarcbdc}, and many
decentralized-blockchain-based stablecoins such as Tether \cite{tether}, Dai \cite{dai}, and
USDC \cite{usdc}\@.  These currencies vary wildly in terms of
privacy, openness, smart contract support, performance, regulatory
risk, solvency guarantees, compliance features, retail vs.\ wholesale
suitability, and centralization of the underlying ledger.  Because of
these differences, and because financial stability demands different
monetary policy in different countries, we cannot hope for a
one-size-fits-all global digital currency.  Instead, to realize the
full potential of digital currencies (and digital assets in general), we need an ecosystem
where many digital currencies can efficiently interoperate.

Effective interoperability requires an {\it exchange}:
 an efficient system for exchanging
one digital asset for another.  Users post offers to trade one asset for another on
the exchange, and then the exchange matches mutually compatible offers together
and transfers assets according to the offered terms.  For example, one user might
offer to trade 110 USD for 100 EUR, and might be matched against another user
who previously offered to trade 100 EUR for 110 USD.  A typical
exchange maintains \emph{orderbooks}
of all of the open trade offers.

The ideal digital currency exchange should, at minimum,
\begin{compactitem}
    \item not give any central authority undue power over the global
    flow of money,
    \item operate transparently and auditably,
    \item give every user an equal level of access,
    \item  enable
    efficient trading between every pair of currencies
    (make effective use of all available liquidity), and
    \item support arbitrarily high throughput,
    without charging significant fees to users.
\end{compactitem}

Scalability is crucial for a piece of financial infrastructure that must last
far into the future, as the number of individuals transacting internationally continues
to grow.  Furthermore, the above feature list is by no means complete; a deployment may want any
number of additional features, such as persistent logging, simplified payment verification \cite{nakamoto:bitcoin},
or integrations with legacy systems, each of which slows down the system's performance.
Scalability, viewed from another angle, enables the system to add features without decreasing
overall transaction throughput (at the cost of additional compute hardware).

The gold standard for avoiding centralized
control is a \emph{decentralized exchange}, or DEX:  a
transparent exchange implemented as a deterministic replicated state
machine maintained by many different parties.  To prevent theft, a DEX
requires all transactions to be digitally signed by the relevant asset
holders.  To prevent cheating, replicas organize history into an
append-only blockchain.  Replicas agree on blockchain state
through a Byzantine-fault tolerant consensus protocol, typically some
variant of asynchronous or eventually synchronous Byzantine agreement~\cite{castro:bfs} for
private blockchains or synchronous mining~\cite{nakamoto:bitcoin} for
public ones.

Unfortunately, existing DEX designs cannot meet the last three desiderata.

\paragraph{Equality of Access} In existing exchange designs, users with low-latency connections to an
exchange server (centralized or not) can spy on trades incoming from other users
and \emph{front-run} these trades.  For example, a front-runner might spy an incoming sell offer,
and in response, send a trade that buys and immediately resells an asset at a higher price 
\cite{hackernoonfrontrunning,bancorfrontrunhack}.
In a blockchain, where a block of trades is either finalized entirely or not at all,
this front-running can be made risk-free.  More generally,
some users form special connections with blockchain operators
to gain preferential treatment for their transactions \cite{daian2019flash}.  
This special treatment typically takes the form of ordering transactions in a block
in a favorable manner.  The result is hundreds of millions of dollars siphoned away from 
users \cite{qin2021quantifying}.

\paragraph{Effective Use of Liquidity}
Existing exchange designs are filled with arbitrage opportunities.
A user trading from one currency ~$A$ to another ~$B$ might 
receive a better overall exchange rate by trading through an
intermediate \emph{reserve}
currency ~$C$, such as USD\@.  
Users must typically choose a single (sequence of) intermediate asset(s),
leaving behind arbitrage opportunities with other intermediate assets.
This challenge is especially problematic in the blockchain space, where
market liquidity is typically fragmented between multiple fiat-pegged tokens.

\paragraph{Computational Scalability}

DEX infrastructure
must also be scalable.  The ideal DEX needs to handle as many transactions
per second as users around the globe want to send, without limiting
transaction rates through high fees.  Trading activity growth may outpace Moore's law,
and should not be limited by the rate of increase of
single-CPU-core performance.
An ideal DEX should handle higher transaction rates
simply by using more compute hardware.

Unfortunately, folk wisdom holds that DEXes cannot scale beyond a few thousand transactions per second.
Na\"ive parallel execution would not be replicable across different blockchain nodes.
This wisdom has led to many alternative blockchain scaling techniques,
such as off-chain trade matching~\cite{warren20170x}, automated
market-makers~\cite{uniswapv2}, transaction rollup
systems~\cite{starkware,loopring}, and sharded blockchains\cite{ethsharding} or side-chains \cite{poon2017plasma}.
These
approaches either trust a third party to ensure that orders are
matched with the best available price, or sacrifice the ability to set
traditional limit orders that only sell at or above a certain price
(reducing market liquidity).  Offchain rollup systems, sharded chains, and side-chains further
fragment market liquidity, leading to cross-shard arbitrage and worse exchange rates for
traders.

A challenge for on-chain limit-order DEXes is that the order of
operations affects their results.  Typically, a DEX matches each offer
to the reciprocal offer with the best price: e.g., the first offer
to buy $1$~EUR might consume the only offer priced at $1.09$~USD, leaving
the second to pay $1.10$~USD\@.  
Each trade is a read-modify-write operation on a shared orderbook data structure,
so trades must be serialized.  This serialization order must be
deterministic in a replicated state machine, but na\"ive parallel execution
would make the order of transactions dependent on non-deterministic
thread scheduling.



\subsection{\SPEEDEX{}: Towards an Ideal DEX}

This paper disproves the conventional wisdom about on-chain DEX
performance.  We present \SPEEDEX{}, a fully on-chain decentralized
exchange that meets all of the desiderata
outlined above. \SPEEDEX{} gives every user
an equal level of access (thereby eliminating a widespread class of risk-free
front-running), eliminates internal arbitrage opportunities
(thereby making optimal use of liquidity available on the DEX),
and is
capable of processing over
200,000 transactions per second when deployed on 48-core machines (Figure
\ref{fig:e2e}). \SPEEDEX{} is designed to scale further when given more
hardware.  

Like most blockchains, \SPEEDEX{} processes transactions
in blocks---in our case, a block of 500,000 transactions every few seconds.
Its fundamental principle is that transactions in a block commute:
a block's result is identical regardless of transaction ordering, 
which enables efficient
parallelization~\cite{clements2015scalable}.

\SPEEDEX{}'s core innovation is to execute every order at the same
exchange rate
as every other order in the same block.  
\SPEEDEX{} processes a block of limit orders as one
unified batch, in which, for example, every $1$~EUR sold to buy USD receives exactly
$1.10$~USD in payment.  Furthermore, \SPEEDEX's exchange rates present
no arbitrage opportunities within the exchange; that is, the exchange rate
for trading USD to EUR directly is exactly the exchange rate for
USD to YEN multiplied by the rate for YEN to EUR\@.  These exchange rates
are unique for any (nonempty) batch of trades.
Users interact with \SPEEDEX{}
via traditional limit orders, and \SPEEDEX{} executes a limit order if and only if
the batch's exchange rate exceeds the order's limit price.

This design provides two additional economic advantages.  First, the exchange
offers liquid trading between every asset pair.  Users can directly trade any asset
for any other asset, and the market between these assets will be at least as liquid as the most liquid
market path through intermediate reserve currencies.
Second, \SPEEDEX{} eliminates a class of front-running that is widespread in modern DEXes.
No exchange operator or user with a low-latency network connection can buy an asset
and resell it at a higher price, within the same block.  (Note that this is not
every type of front-running; \S\ref{sec:limitations} and \S\ref{sec:relwork} contrast \SPEEDEX{}'s guarantees
with those of other mitigations, and how they can be combined.)


Furthermore, this economic design enables a scalable systems design that is not
possible using traditional order-matching semantics.  Unlike every other DEX, the operation of \SPEEDEX{}
is efficiently parallelized, allowing \SPEEDEX{} to scale to transaction rates far beyond those seen today.
Transactions within a
block commute with each other precisely because trades all happen at the same
shared set of exchange rates. This means that the transaction processing engine has no need for the sequential
read-modify-update loop of traditional orderbook matching engines.  
Account balances are adjusted using only hardware-level atomics, rather than locking.

\subsection{\SPEEDEX{} Overview}

\SPEEDEX{} is not a blockchain itself; rather, it is a DEX component that can
be integrated into any blockchain.  A copy of the \SPEEDEX{} module
should run inside every replica of a blockchain using the system.
\SPEEDEX{} does not depend on any specific property
of a consensus protocol, but automatically benefits from throughput advances in 
consensus and transaction dissemination (such as \cite{danezis2022narwhal}).
\SPEEDEX{} heavily uses concurrency and benefits from uninterrupted
access to CPU caches, and as such is best implemented directly within blockchain node software (instead of as a smart contract).

We implemented \SPEEDEX{} within a custom blockchain using the
HotStuff consensus
protocol \cite{yin:hotstuff}; this implementation provides the measurements in this paper.
We created a second implementation as a component of
\thestellarbc\cite{lokhava:stellar},
which is
considering a Layer-1 \SPEEDEX{} deployment.

Implementing \SPEEDEX{} introduces both theoretical algorithmic
challenges and systems design challenges.  The core algorithmic challenge is
the computation of the batch prices.
This problem maps to a well-studied problem in the theoretical literature
(equilibrium computation of {\it Arrow-Debreu Exchange Markets}, \S \ref{sec:arrowdebreu});
however, the algorithms in the theoretical literature scale extremely poorly, both
asymptotically and empirically, as the number of open limit orders increases.

We show that the market instances which arise in \SPEEDEX{}
have additional structure not discussed in the theoretical literature, 
and use this structure to build a novel algorithm
(based on the \Tat{} process of \cite{codenotti2005market})
that, in practice, efficiently approximates batch clearing prices.
We then explicitly correct approximation error with a follow-up linear
program.

Our algorithm's runtime is largely independent of the
number of limit orders%
---each \Tat{} query has a runtime of $O(\#$assets$^2\cdot\lg(\#$offers$))$ and the linear program
has size $O(\#$assets$^2)$.  This gives a crucial algorithmic speedup because in the real world,
the number of currencies is much smaller than the number of market
participants. (The experiments of
\S \ref{sec:tateval} and \S\ref{sec:scalable} use 50 assets and tens of
millions of open offers.)

On the systems design side, to implement this exchange, we
design natural commutative transaction semantics
and implement data structures designed for concurrent, batched manipulation and for
efficiently answering queries about the exchange state from the
price computation algorithm.

In recent years, the economics literature has begun discussing
the use of batched trading systems
in traditional markets to combat front-running and externalities associated with high-frequency trading
\cite{aquilina2020quantifying,budish2015high,budishpubliccomment}.  This literature focuses
only on the case of trading between two assets (where price computation is simple)
or where all trades use a single \emph{numeraire} currency~\cite{budish2022flow}.
Our contribution to this line of work is to demonstrate the feasibility of
a batch trading system that exchanges many assets and many numeraire currencies simultaneously,
thereby expanding the design space of implementable market structures.

\section{System Architecture}
\label{sec:architecture}

\SPEEDEX{} is an asset exchange implemented as a replicated state machine
in a blockchain architecture (Fig. \ref{fig:diagram}).  
Assets are issued and traded by
\emph{accounts}.  Accounts have public signature keys authorized to
spend their assets.  Signed transactions are multicast on an overlay network (Fig. \ref{fig:diagram}, 1)
among block \emph{producers}.  At each round, one or more producers propose
candidate blocks extending the blockchain history (Fig. \ref{fig:diagram}, 2).  A set of
\emph{validator} nodes (generally the same set or a superset of the producers) 
validates and selects one of the blocks through
a consensus mechanism (Fig. \ref{fig:diagram}, 3).  \SPEEDEX{} is suitable for integration into a
variety of blockchains, but benefits from a consensus layer with
relatively low latency (on the order of seconds), such as
BA$\star$~\cite{gilad:algorand}, SCP~\cite{lokhava:stellar}, or
HotStuff~\cite{yin:hotstuff}.


\begin{figure}  
\centering

\includegraphics[width=\columnwidth]{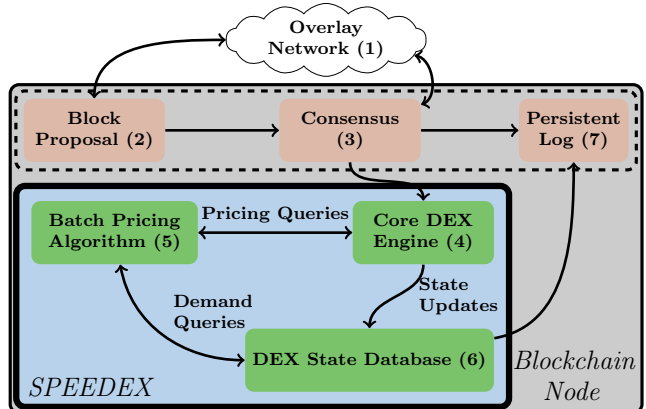}

\caption{
    Architecture of \SPEEDEX{} module (4, 5, 6) inside one blockchain node.
    \label{fig:diagram}
}

\end{figure}

The implementation evaluated here uses HotStuff~\cite{yin:hotstuff},
while the \thestellarbc{} implementation relies on \stellar{}'s
existing consensus protocol, SCP~\cite{mazieres2015stellar}.

Most central banks and digital currency issuers maintain a
ledger tracking their currency holdings.  \SPEEDEX{} is not intended to
replace these primary ledgers.  Rather, we expect banks and
other regulated financial institutions to issue 1:1 backed token
deposits onto a blockchain that runs \SPEEDEX{} and provide interfaces for moving
money on and off the exchange. These assets could be digital-native
tokens as well; any divisible and fungible asset can integrate with \SPEEDEX.

\SPEEDEX{} supports four operations: account creation, offer creation,
offer cancellation, and send payment.  
Offers on \SPEEDEX{} are traditional limit orders.  For example, one offer might 
offer to sell 100~EUR to buy USD, at a price no lower than 0.91 USD/EUR.  Offers can
trade between any pair of assets, in either direction.  Another offer, for example,
might offer to sell 100~USD in exchange for EUR, at a price no lower
than 1.10~EUR/USD.

What makes \SPEEDEX{} different from existing DEXes is the manner in which
it processes new orders.  Traditional exchanges process trades sequentially,
implicitly computing a matching between limit orders.
\SPEEDEX{}, by contrast, processes trades in batches (typically, one batch would consist of all of the limit orders
in one block of the blockchain).  

In a blockchain, all of the transactions in a block are appended
at the same clock time, so there is no reason \emph{a priori} why a DEX should pick
one ordering over another.
\SPEEDEX{}, by design, imposes no ordering whatsoever
between transactions in a block.  Side effects of a transaction are only visible
to other transactions in future blocks.

Logically, when the \SPEEDEX{} core engine
(Fig. \ref{fig:diagram}, 4) receives a finalized block of trades, it applies all of the 
trades at exactly the same time and computes an unordered set of state changes, which it passes
to its exchange state database (Fig. \ref{fig:diagram}, 6).  This database records
orderbooks and account balances, and 
is periodically written to the persistent log (Fig \ref{fig:diagram}, 7).

\subsection{\SPEEDEX{} Module Architecture}

To implement an exchange that operates replicably 
where trades in a block are not ordered relative to each other,
\SPEEDEX{} requires a
set of trading semantics such that operations {\it commute}.  

Traditional exchange semantics are far from commutative:  one offer to
buy an asset is matched with the lowest priced seller, and the next
offer to buy is matched against the second-lowest priced seller, and
so on.  Hence, every trade can occur at a slightly different exchange
rate.


Instead, to make trades commutative, \SPEEDEX{} computes in every block a
\emph{valuation}
$p_{\asset{A}}$ for every asset ${\asset{A}}$.  The units of $p_{\asset{A}}$ are
meaningless, and can be thought of as a fictional valuation asset that
exists only for the duration of a single block.  However, valuations
imply exchange rates between different assets---every sale of asset
${\asset{A}}$ for asset ${\asset{B}}$ occurs at a price of $p_{\asset{A}}/p_{\asset{B}}$.  Unlike traditional exchanges,
\SPEEDEX{} does not 
explicitly compute a matching between trade offers.
Instead, offers trade with a conceptual \emph{``auctioneer''} entity at these
exchange rates.  Trading becomes commutative because all trades in one asset pair occur at the same price.

The main algorithmic challenge is to compute valuations where the
exchange \emph{clears}---
i.e., the amount of each asset sold to
the auctioneer equals the amount bought from the auctioneer.

When the auctioneer sets exact clearing valuations, 
an offer trades fully
with the auctioneer if its limit price is strictly below the
auctioneer's exchange rate, and not at all if its limit price exceeds
the auctioneers rate.  When the limit price equals the exchange rate,
\SPEEDEX{} may execute the offer partially.  Note that an exchange is a zero-sum system;
as compared to sequential execution, some users may see better prices and some may see worse,
but \SPEEDEX{} guarantees that no user's price is worse than their
minimum limit price.

\begin{theorem}
\label{thm:valsexist}

Exact clearing valuations always exist.  These valuations are unique up to
rescaling.\footnote{And technical conditions (\S\ref{sec:uniquevalsexist}), e.g. everything
clears an empty market.}

\end{theorem}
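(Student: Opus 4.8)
The plan is to recognize this as a special case of the Arrow--Debreu exchange market equilibrium existence theorem, and to establish uniqueness via a separate gross-substitutes (or strict-convexity) argument. First I would set up the market formally: each open limit order is a small "agent" who brings one asset to market (the asset being sold, in the amount offered) and whose demand, as a function of the valuation vector $p = (p_A)_{A}$, is to convert its entire endowment into the asset it wants to buy if $p_{\text{sell}}/p_{\text{buy}}$ strictly exceeds the limit price, to hold its endowment unchanged if the ratio is strictly below, and to be indifferent along any split when the ratio equals the limit price. One then defines the aggregate excess demand $z(p)$ for each asset as (total bought from the auctioneer) minus (total sold to the auctioneer), and observes that "exact clearing valuations" are precisely the $p > 0$ with $z(p) = 0$ (coordinatewise, modulo the indifference slack on boundary orders).

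For \textbf{existence}, I would verify the standard hypotheses that make a fixed-point argument go through: $z(p)$ is homogeneous of degree $0$ in $p$ (only ratios matter, which also explains the "unique up to rescaling" clause), it satisfies Walras's law $p \cdot z(p) = 0$ (every order's value is conserved under its own trade, since it trades at the auctioneer's rates), and it is bounded below (no asset's excess demand can drop below the negative of the total endowment of that asset). The one non-routine point is continuity: the individual demands jump at limit prices, so $z$ is not continuous as a point-valued map. The clean fix is to pass to the set-valued demand correspondence, in which indifferent orders may be fractionally executed; this correspondence is upper-hemicontinuous with nonempty convex compact values, so Kakutani's fixed-point theorem (applied on the price simplex, after the usual truncation/perturbation to handle prices approaching the boundary) yields a $p^\star$ with $0 \in Z(p^\star)$. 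I would then argue no equilibrium price can be zero: if $p_A = 0$, the asset $A$ is infinitely cheap relative to any asset someone wants, so as long as the market is not degenerate (some order sells something someone wants), excess demand for $A$ is unbounded, a contradiction — this is exactly where the footnote's "technical conditions" enter, and I would state the precise non-degeneracy hypothesis (e.g. the directed "wants" graph on assets is such that every asset with a nonzero endowment lies on a cycle, or more simply that the set of orders is such that no proper subset of assets is self-contained).

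For \textbf{uniqueness up to scaling}, the indifference/linearity of each order's "utility" means the aggregate excess demand satisfies the weak gross substitutes property: raising $p_B$ (weakly) raises the excess demand for every other asset $A \neq B$, because it can only flip boundary orders selling $B$ into not-selling and orders selling into $B$ into selling, both of which weakly increase demand for the $A$'s. The standard consequence (e.g. via the argument that if $p$ and $q$ are both equilibria, rescale $q$ so that $q \leq p$ with equality in some coordinate, then consider the coordinate where $q_A/p_A$ is maximal and derive $z_A(q) > z_A(p) = 0$ unless $q$ is a scalar multiple of $p$) gives uniqueness of the ray. The main obstacle in the whole argument is handling the discontinuity correctly — making sure the set-valued formulation both delivers a genuine fixed point and corresponds to the paper's notion of partial execution of boundary orders — and, secondarily, pinning down the exact non-degeneracy condition under which prices stay strictly positive so that the "up to rescaling" statement is literally true rather than true only generically.
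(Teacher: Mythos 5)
Your proposal is sound and lands on the same high-level strategy as the paper (encode each limit order as a linear-utility agent in an Arrow--Debreu exchange market, get existence from the standard theory, get uniqueness from a monotonicity argument), but the paper executes both halves differently. For existence, the paper does not redo the Kakutani argument: it proves that a sell offer's optimal behavior is exactly that of an agent maximizing $u(x_S,x_B)=\alpha x_B + x_S$ (Theorem~\ref{thm:linearutils}), observes that every such utility has nonzero marginal utility on the good being sold, and then invokes Theorem~1 of Devanur et al.\ \cite{devanur2016rational}, whose condition~(*) this verifies; that theorem already delivers an equilibrium with strictly positive prices, so no separate boundary/truncation argument is needed. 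Your fixed-point route is more self-contained and would work, but the set-valued-correspondence and positive-price details you'd have to write out are precisely what the citation buys for free. For uniqueness, the paper argues via a cut in the trade graph (Theorem~\ref{thm:partitioning}): if two equilibria disagree on some price ratio, take the set $C$ of assets whose relative prices all moved the same way, and use conservation of value flow across the cut plus monotonicity of each pair's supply in its exchange rate to force the net flow across the cut to be zero in both equilibria. This is a close cousin of your weak-gross-substitutes max-ratio argument, but it has one concrete advantage that matters here: in your argument the key step ``derive $z_A(q) > z_A(p) = 0$'' is only a \emph{weak} inequality under weak gross substitutability, and the strictness fails exactly when no trade crosses the relevant cut. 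You gesture at this (``true only generically''), but the paper's cut formulation turns that failure mode into the precise statement of the exception --- uniqueness up to rescaling holds unless the assets partition into two groups with no trade between them (Corollary~\ref{corollary:connected}) --- which is exactly the content of the footnote's ``technical conditions.'' If you pursue your route, you should make the strictness analysis explicit rather than leaving it as a caveat.
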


Theorem \ref{thm:valsexist} is a restatement of a general theorem of Arrow-Debreu exchange market theory \cite{devanur2016rational} (\S \ref{sec:uniquevalsexist}).


Concretely, whenever the core \SPEEDEX{} engine (Fig \ref{fig:diagram}, 4) receives
a newly finalized block, one of its first actions is to query an algorithm
that computes clearing valuations (Fig \ref{fig:diagram}, 5).  It then uses the output of this algorithm to
compute the modifications to the exchange state (Fig \ref{fig:diagram}, 6).

As valuations that clear the market always exist for any set of limit orders, 
there is no adversarial input that \SPEEDEX{} cannot process.
And because these valuations are unique, \SPEEDEX{} operators do not have a strategic
choice between different sets of valuations.  \SPEEDEX{}'s algorithmic task is
to surface information about a fundamental mathematical property of a batch.


Unfortunately, we are not aware of a practical method to compute
clearing prices exactly.  (The number of bits required to represent
exact clearing prices may be extremely
large~\cite{devanur2016rational}, and in a natural extension of the
\SPEEDEX{} model~\cite{ramseyer2022batch} the clearing prices are not
even rational.)  \SPEEDEX{} therefore uses \emph{approximate} clearing
prices.

At nonexact clearing prices, the conceptual auctioneer will not have
enough of some asset(s) to pay out all offers willing to accept the
market price.  \SPEEDEX{} addresses this deficit in two ways.  First,
the auctioneer proportionally reduces the amount it pays out to offers
by a small fraction---in other words, it charges a commission.
Commissions are common for exchanges, whether decentralized or not,
though \SPEEDEX{} does it for market clearing rather than profit
reasons.  To avoid incentivizing high trading costs, the
implementation returns commissions to the asset issuers,
and one goal of our price computation algorithm's design 
is to make this commission as low as
computationally practical.
Second, the
auctioneer can refrain from filling some marketable offers.  Whereas
in a perfect Arrow-Debreu exchange market, offers at the market price
may be partially filled or not filled, in \SPEEDEX{} the same applies
to offers very close to the market price, even if they still beat the
market price by a small percentage.

\SPEEDEX{} always rounds trades in favor of the auctioneer.
Our implementation burns collected transaction fees and accumulated rounding
error
(effectively returning them to the issuer by reducing the issuer's
liabilities).  The \stellar{} implementation eliminates the fee
 and returns the accumulated rounding error to asset issuers.




\subsection{Design Properties}

\paragraph{Computational Scalability}

\SPEEDEX{}'s commutative semantics allow effective parallelization of
DEX operation.  Because transactions within a block are not
semantically ordered, DEX state is identical regardless of the order
in which transactions are applied.  This exact replicability is, of
course, required for a \emph{replicated} state machine.  The
order-independence also means \SPEEDEX{} transactions can be executed
in parallel by all available CPU cores despite the fact that thread
interleaving is nondeterministic in multicore machines.  Almost all
coordination occurs via hardware-level atomics (e.g., atomic add on
64-bit integers) without spinlocks.

\SPEEDEX{} stores balances in accounts, rather than in discrete,
unspent coins (often called ``UTXOs'').  It also supports
single-currency payment operations, which are simpler than DEX
trading.  Hence, \SPEEDEX{} disproves the popular
belief~\cite{opencbdc,reijsbergen2020exploiting}
that account-based ledgers are not compatible
with horizontal scalability.


\paragraph{No risk-free front running}

Well-placed agents in real-world financial markets can spy on
submitted offers, notice a new transaction $T$, 
and then submit a transaction $T'$ (that executes
before $T$) that buys an asset and re-sells it to $T$ at a slightly higher
price.  In some blockchain settings, $T'$ can be done as a
single atomic action \cite{daian2019flash}.  However, since every
transaction sees the same clearing prices in \SPEEDEX{}, back-to-back
buy and sell offers would simply cancel each other out.
Relatedly, because every offer sees the same
prices, a user who wishes to trade immediately can set a very low
minimum price and be all but guaranteed to have their trade executed,
but still at the current market price.

Risk-free front-running is one instance of the widely discussed 
``Miner Extractable Value'' (MEV) \cite{daian2019flash}
phenomenon, in which block producers reorder transactions
within a block for their own profit (or in exchange for kickbacks).
By eliminating the ordering of transactions
within a block, \SPEEDEX{} eliminates a large source of MEV.  However,
this does not eliminate every type of front-running manipulation, 
such as delaying victim transactions to a future block (see \S\ref{sec:limitations}).

\paragraph{No (internal) arbitrage and no central reserve currency}

An agent selling asset ${\asset{A}}$ in exchange for asset ${\asset{B}}$ will see a price
of $p_{\asset{A}}/p_{\asset{B}}$.  An agent trading ${\asset{A}}$ for ${\asset{B}}$ via some intermediary
asset ${\asset{C}}$ will see exactly the same price, as $\frac{p_{\asset{A}}}{p_{\asset{C}}} \cdot
\frac{p_{\asset{C}}}{p_{\asset{B}}}=\frac{p_{\asset{A}}}{p_{\asset{B}}}$.  Hence, one can efficiently trade between assets
without much pairwise liquidity with no need to search for an optimal
path.  By contrast, many international payments today go through USD
because of a lack of pairwise liquidity.  The multitude of USD-pegged
stablecoins in modern blockchains further fragments liquidity.
Of course, there can still be
arbitrage between \SPEEDEX{} and external markets.  

\section{Commutative DEX Semantics}
\label{sec:commutativesemantics}

To propose or execute a block of transactions, the \SPEEDEX{} core engine performs the following three actions.

\begin{enumerate}[nolistsep]

    \item[1] For each transaction in the block (in parallel), check
      signature validity, collect new limit offers, and compute
      available account balances after funds are committed to offers
      or transferred between accounts.  
      When proposing a block of transactions, \SPEEDEX{}
      discards potentially invalid transations.

    \item[2] When proposing a block, 
    compute approximate clearing prices and approximation
      correction metadata.

    \item[3] Iterate over each offer, making a trade or adding it to the resting orderbooks
    (based on the prices and metadata).

\end{enumerate} 

For transaction processing in step 1 to be commutative, it must be the case that the step 1 output effects
(specifically: create
a new account, create a new offer, cancel an existing offer, and send a payment) of one 
transaction have no influence on the output effects of another transaction.  This means that one transaction
cannot read some value that was output by another transaction (in the same block), and that whether
one transaction succeeds cannot depend on the success of another transaction.


To meet the first requirement, traders include all parameters
to their transactions within the transaction itself.  
The second requirement necessitates precise management of
transaction side effects.  
At most one transaction per block may
alter an account's metadata (such as the account's public key
or existence), and metadata changes take effect only at the end of
block execution.
Similarly, an offer cannot be created and cancelled in the same block.
As payments and trading are the common case, we
do not consider these restrictions a serious limitation.

\SPEEDEX{} must also ensure that no account is overdrafted.  That is
to say, after processing all transactions in a block, the unlocked
balance of every account must be nonnegative (where an open offer
locks the offered amount of an asset for the duration of its
lifetime).  Unlike most distributed ledgers, \SPEEDEX{} cannot simply
deem the second of two conflicting transactions to fail---after all,
transactions have no ordering.  Instead, our implementation requires a
block proposer to ensure that a block cannot cause overdrafts; every
node rejects blocks that violate this property.  To generate valid
blocks, proposers use a conservative process outlined in \S
\ref{sec:nd_assembly}.  The design requires passing information from
the \SPEEDEX{} database (Fig \ref{fig:diagram}, 6) to the proposal
module (Fig \ref{fig:diagram}, 2).

The core remaining technical challenge is the batch price computation (Fig \ref{fig:diagram}, 5).

\section{Price Computation}

\subsection{Requirements}

As discussed earlier, in every block, \SPEEDEX{} computes batch clearing prices and executes trades in response to these prices.
Every DEX is subject to two fundamental constraints:

\begin{compactitem}
\item

{\bf Asset Conservation} No assets should be created out of nothing.  As discussed in
\S\ref{sec:architecture}, offers in \SPEEDEX{} trade with a virtual
auctioneer.  After a batch of trades, this auctioneer cannot be left
with any debt.  We do allow the auctioneer to burn some surplus assets
as a fee.

\item  {\bf Respect Offer Parameters}
No offer trades at a worse price than its limit price.
\end{compactitem}

Additionally, \SPEEDEX{} should facilitate as much trade volume as
possible.  (Otherwise, the constraints could be vacuously met by never
trading.)  Furthermore, price computation must be efficient, as it
occurs for each block of trades, every few seconds.  Finally,
\SPEEDEX{} should minimize the number of offers that trade partially;
asset quantities are stored as integer multiples of a minimum unit, so
each partial trade risks accumulating a rounding error of up to one
unit.

\subsection{From Theory To Practice}
\label{sec:theorytopractice}

The problem of computing batch clearing prices is equivalent to the problem
of computing equilibria in linear Arrow-Debreu Exchange Markets (\S \ref{sec:speedexandad}).  Our algorithm is based
on the iterative \Tat{} process from this literature \cite{codenotti2005market}.

However, the runtimes of the theoretical algorithms scale very poorly, both asymptotically and empirically.  They also output
approximate equilibria for notions of approximation that violate the two fundamental constraints above (for example,
Definition 1 of \cite{codenotti2005market} permits equilibria to mint new assets and to steal from a user).

We develop a novel algorithm for computing equilibria that runs efficiently in practice (\S \ref{sec:tateval})
and explicitly ensures that (1) asset amounts are conserved and (2) every offer trades at exactly the market prices,
and only if the offer's limit price is at or below the batch exchange rate.  First, \Tat{} approximates clearing prices (\S \ref{sec:tatonnement}).
We show that the structure of the types of trades in \SPEEDEX{}
lets each iteration run in time logarithmic in the number of open limit offers (via a series of binary searches), giving an algorithm
asymptotically faster than that within the theoretical literature.


We then explicitly correct for the approximation error with a linear program (\S \ref{apx:lp}).
Crucially, the size of this linear program is linear in the number of asset pairs, and has no dependence 
on the number of open trade offers.  The linear program ensures that, no matter what prices \Tat{} outputs, (1) asset amounts are conserved, and (2) no offer
trades if the batch price is less than its limit price.

To be precise, our algorithm outputs the following:

\begin{compactitem}

    \item {\bf Prices:} For each asset ${\asset{A}}$, \SPEEDEX{} computes an asset valuation $p_{\asset{A}}$.
    One unit of ${\asset{A}}$ trades for $p_{\asset{A}}/p_{\asset{B}}$ units of ${\asset{B}}$.

    \item {\bf Trade Amounts:}  For each asset pair $({\asset{A}}, {\asset{B}})$, \SPEEDEX{} computes an amount $x_{{\asset{A}},{\asset{B}}}$ of 
    asset ${\asset{A}}$
    that is sold for asset ${\asset{B}}$ (again, at exchange rate $p_{\asset{A}}/p_{\asset{B}}$).

\end{compactitem}

For every asset pair $({\asset{A}}, {\asset{B}})$, \SPEEDEX{} sorts all of the offers selling ${\asset{A}}$ for ${\asset{B}}$ by their limit prices,
and then executes the offers with the lowest limit prices, until it reaches a total amount of ${\asset{A}}$ 
sold of $x_{{\asset{A}},{\asset{B}}}$
(tiebreaking by account ID and offer ID).

As a bonus, this method ensures that at most one offer per trading pair executes partially, minimizing rounding error.



\section{Price Computation: \Tat{}}
\label{sec:tatonnement}

\Tat{} is an iterative process; starting from an (arbitrary) initial set of prices, it iteratively refines them until
the prices reach a stopping criterion.


Each iteration of \Tat{} starts with a \emph{demand query}.  The
\emph{demand} of an offer is the net trading of the offer 
(with the auctioneer) in response to a set of prices, and the demand of a set of offers is the sum of the demands of each offer.
\Tat{}'s goal is to find prices such
that the amount of each asset sold to the auctioneer matches the amount bought from it (in other words,
the net demand is $0$).

\begin{example}
Suppose that a limit order offers to sell 100 USD for EUR with a
minimum price of\/ $0.8$ EUR per USD.
    If the candidate prices are such that $\alpha =
    \frac{p_{\text{USD}}}{p_{\text{EUR}}} > 0.8$, then the limit order
    would like to trade,
and its demand is $(-100~ \text{USD}, 100\alpha~ \text{EUR})$.  Otherwise, its
demand is $(0~ \text{USD}, 0~ \text{EUR})$.
\end{example}


\paragraph{Iterative Price Adjustment.}
If more units of an asset are demanded from the auctioneer than are supplied to it (a positive net demand, meaning a deficit for the auctioneer),
then the auctioneer raises the price of the asset.  
Otherwise, the auctioneer has a surplus, so 
it lowers the price of the asset.
Implementing this process effectively requires careful numerical normalization in response to 
differences in prices and trade volumes, which we describe in detail in \S \ref{sec:priceupdate}.


\Tat{} repeats this process until the current set of prices is sufficiently close to 
the market clearing prices (or it hits a timeout).  Specifically, \Tat{} iterates until it 
has a set of prices such that, if the auctioneer charges
a commission of $\varepsilon$, then there is a way to execute offers such that:

\begin{compactenum}
\item [1] The auctioneer has no deficits (assets are conserved)
\item [2] No offer executes outside its limit price bound
\item [3] Every offer with a limit price more than a $(1-\mu)$ factor below the auctioneer's exchange rate executes in full.
\end{compactenum}

The last condition is a formalization of the notion that \SPEEDEX{} should satisfy as many trade requests as possible.  Informally,
an offer with a limit price equal to the auctioneer's exchange rate is indifferent between trading and not trading,
while one with a limit price far below the auctioneer's exchange rate strongly prefers trading to not trading.



\subsection{Efficient Demand Queries}
\label{sec:logtransform}

Implemented na\"\i vely, \Tat{}'s demand queries would consist of a loop over every open exchange offer.  
This is impossibly expensive, even if the loop is massively parallelized.
Concretely, one invocation of \Tat{} can require many thousands of demand queries.  Every demand query therefore
must return results in at most a few hundred microseconds. 

This na\"\i ve loop appears to be required for the (more general) problem instances studied in the theoretical literature.
However, all of the offers in \SPEEDEX{} are traditional limit orders
that sell one asset in exchange for one other asset at some limit price.
An offer with a lower limit price always trades if an offer with a higher limit price trades.
Therefore, \SPEEDEX{} groups offers by asset pair and sorts offers by their limit prices.
We drive the marginal cost of this sorting to near zero by using an offer's limit price
as the leading bits (in big-endian) of the keys in our Merkle tries (\S\ref{sec:fastsorting}).

\SPEEDEX{} can therefore compute a demand query with a sequence of binary searches
(\S \ref{sec:tatpreprocessing}).
Individual binary searches can run on separate CPU cores.
The number of open offers (say, $M$) on an exchange is vastly higher than the number of assets traded (say, $N$).
Our experiments in \S \ref{sec:scalable} trade $N=50$ assets with $M=$ tens of millions of open offers; the complexity reduction
from $O(M)$ to $O(N^2\lg(M))$ is crucial.

\subsection{Multiple \Tat{} Instances}
\label{sec:multiinstance}
\S \ref{apx:tatmods} describes several other \Tat{} adjustments that help it respond well to a wide variety of market conditions.
Some of these adjustments are parametrized (such as how quickly one
should adjust the candidate prices); rather than pick one set of
control parameters, we run several instances of \Tat{} in parallel and
take whichever finishes first as the result.  (In the case of a
timeout, we choose the set of prices that minimizes the
\emph{unrealized utility} [\S\ref{sec:imbalanced}]\@.)  \SPEEDEX{}
includes the output of \Tat{} and the subsequent linear program in the
headers of proposed blocks (\S \ref{sec:header}).



\section{Evaluation: Price Computation}
\label{sec:tateval}

\Tat{}'s runtime depends primarily on the target approximation accuracy,
the number of open trade offers, and the distribution of the open
trade offers.  The runtime increases as the desired accuracy
increases.  Surprisingly, the runtime actually {\it decreases} as the number
of open offers increases.  And like many optimization problems, \Tat{} performs best when
the input is normalized, meaning in this case that the (normalized, \S \ref{sec:priceupdate}) 
volume traded of each asset 
is roughly the same.

\Tat{} runs once per block.  
To produce a block every few seconds, \Tat{} must run 
in under one second most of the time.  Our implementation runs \Tat{}
with a timeout of 2 seconds, but it typically converges much faster.

\subsection{Accuracy and Orderbook Size}

We find that \Tat{} converges more quickly as the number of open offers {\it increases}.
\Tat{} converges fastest when small price changes do not cause
comparatively large changes in overall net demand.
However, an offer's behavior is a discontinuous function (of prices);
it does not trade below its limit price and trades fully above it.

There are two factors that mitigate these ``jump discontinuities.''
First, \Tat{} approximates optimal offer behavior by a continuous function (\S \ref{apx:error}).
Smaller $\mu$ means a closer approximation.
Second, the more offers there are in a batch, the smaller any one offer's relative contribution to overall demand.
This last factor explains why \Tat{} converges more quickly when there are more offers on the exchange.
A real-world deployment might raise accuracy as trading increases.

Fig. \ref{fig:mintxs} plots the
minimum number of trade offers that \Tat{} needs to consistently find
clearing prices for $50$ distinct assets in under $0.25$ seconds
(for the same trade distribution used in \S \ref{sec:scalable}).
To put these fee rates in context, BinanceDex \cite{binancedexfees} charged a fee of either
$0.1\%\approx 2^{-10}$ or $0.04\%\approx 2^{-11.3}$.
Uniswap~\cite{uniswapv2,uniswapv3} charges $1\%$, $0.3\%$, or
$0.05\%$ (${\sim}2^{-6.6}$, ${\sim}2^{-8.4}$, and
${\sim}2^{-11}$, respectively), and Coinbase charges $0.5\%$ to
$4\%$~\cite{coinbasefees} (
$\sim 2^{-7.6}$ to $\sim 2^{-4.6}$).

\begin{figure}  
\centering
\includegraphics[width=\columnwidth]{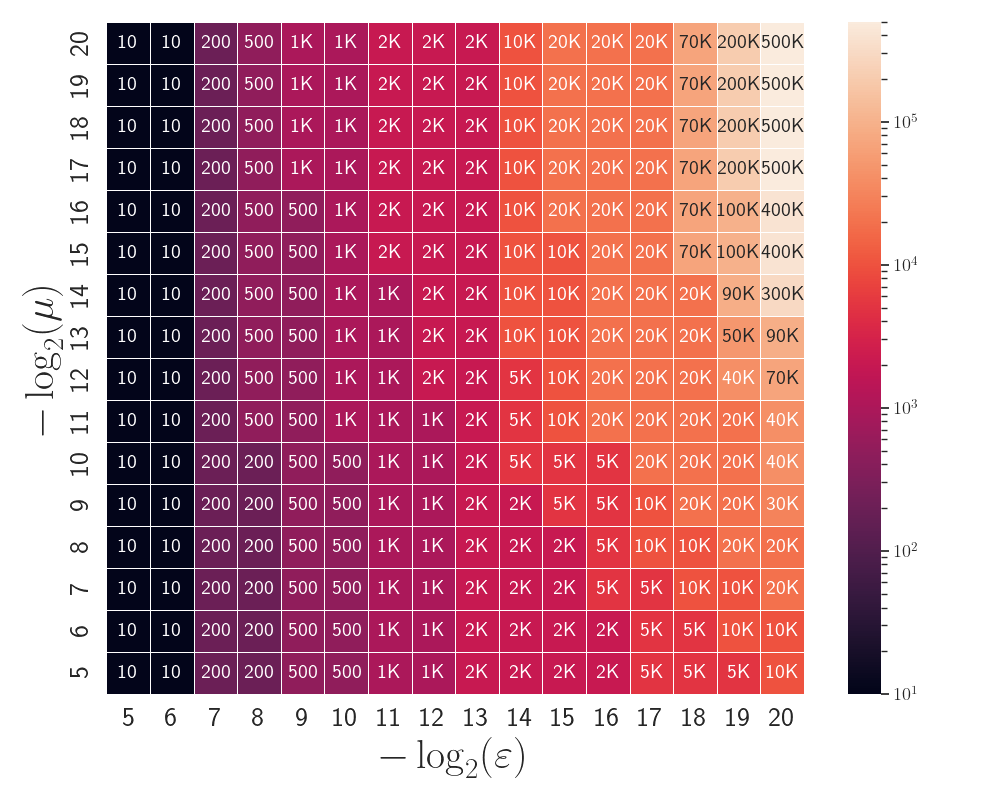}

\caption{
    Minimum number of offers needed for \Tat{} to run in under $0.25$ seconds (Smaller is better. Times averaged over 5 runs).
    The x axis denotes offer behavior approximation quality ($\mu$), 
    and the y axis denotes the commission ($\varepsilon$).
    \label{fig:mintxs}
}

\end{figure}


Though our experiments rarely experienced \Tat{} timeouts, \Tat{}
timeouts caused by sparse orderbooks may be self-correcting:  If
\SPEEDEX{} proposes suboptimal prices, fewer offers will find a
counterparty and trade.  When fewer offers clear in one block, more
are left to facilitate \Tat{} in the next block.
\S \ref{sec:alternatestrats} describes an alternative algorithm that is effective on small batches.


\subsection{Robustness Checks}
\label{sec:imbalanced}


As a robustness check, we run \Tat{} against a trade distribution
derived from volatile cryptocurrency market data.  In an ideal
world, we could replay trades from another DEX through \SPEEDEX{}\@.
Unfortunately, doing so poses several problems.
First, in practice, almost all DEX trades go through four de facto
reserve currencies (ETH, USD, USDC, and USDT), three of which are
always worth close to \$1.  The decomposition between a few core
``pricing'' assets and a larger number of other assets 
makes price discovery
too simple. 
Second,
transaction rates on existing DEXes are too low to provide enough data. 
Finally, we suspect users would submit different
orders to \SPEEDEX{} than they might on a traditional exchange,
due to the distinct economic properties of batch trading systems.

\paragraph{Experiment Setup.} 
As a next-best alternative, we generate a dataset based on
historical price and market volume data.  We took the 50 crypto assets
that had the largest market volume on December 8, 2021 (as reported by
\url{coingecko.com}) and for each asset, gathered 500 days of price
and trade volume history.  We then generated 500 batches of 50,000
transactions.  A new offer in batch $i$ sells asset $\asset{A}$ (and buys
asset $\asset{B}$) with probability proportional to the relative volume of
asset $\asset{A}$ (and asset $\asset{B}$, conditioned on $\asset{A}\neq \asset{B}$) on day $i$, and
demands a minimum price close to the real-world exchange rate on day
$i$.
The extreme volatility of cryptocurrency markets and variation between
these 50 assets make this dataset particularly difficult for \Tat{}.  To
further challenge \Tat{}, we use a smaller block size of $\sim30,000$
(compared to $500,000$ in \S\ref{sec:scalable}).


The experiment charged a commission of
$\varepsilon=2^{-15}\approx 0.003\%$, and attempted to clear offers
with limit prices more than $1-\mu$ below the market prices, for
$\mu=2^{-10}\approx 0.1\%$ (\S \ref{apx:error}).

\paragraph{Experiment Results.}
The experiment ran for 500 blocks.  
Each block created about 25,000 new offers and a few thousand
cancellations and payments. 

\Tat{} computed an equilibrium quickly
in 350 blocks, and in the remainder, computed prices sufficiently close
to equilibrium that the follow-up linear program facilitated the vast
majority of possible trading activity.


We measure the quality of an approximate set of prices by the ratio of the ``unrealized utility'' to the ``realized utility.''
The utility gained by a trader from selling one unit of an asset is the difference between the market exchange rate and the trader's limit price,
weighted by the valuation of the asset being sold.  Note that the units do not matter when comparing relative amounts of ``utility.''

In the blocks where \Tat{} computed an equilibrium quickly, the mean ratio of unrealized to realized utility was $0.71\%$
(max: $4.7\%$), and in the other blocks, the mean ratio was $0.42\%$ (max: $3.8\%$).

Recall that \Tat{} terminates as soon as a stopping criteria is met;
roughly, ``does the supply of every asset exceed demand,''
so one mispriced asset will cause \Tat{} to keep running.
However, every \Tat{} iteration continues to refine the price of every asset.  This is why
\Tat{} actually gives more accurate results in the batches it found challenging.  A deployment might enforce a minimum number of \Tat{} rounds.

Qualitatively, \Tat{} correctly prices assets with high trading volume
and struggles on sparsely traded assets (as might be expected from
Fig. \ref{fig:mintxs}).  \Tat{} also adjusts its price adjustment rule
in response to recent market conditions (\S \ref{sec:priceupdate}), 
a tactic which is less effective on volatile assets.

Should this pose a
problem in practice, a deployment could choose to vary the approximation parameters by
trading pair.

\section{Evaluation: Scalability}
\label{sec:scalable}

We ran \SPEEDEX{} on four r6id.24xlarge instances in an Amazon Web Services datacenter.
Each instance has 48 physical CPU cores divided over two Intel Xeon Platinum 8375CL chips (32 total cores per socket, 24 of which are allocated to
our instances), running at
2.90Ghz with 
hyperthreading enabled, 768GB of memory, 4 1425GB NVMe drives connected in a RAID0 configuration.  
We use the XFS filesystem \cite{xfs}.  
These experiments use the HotStuff consensus protocol \cite{yin:hotstuff},
 and do not include Byzantine replicas or a rotating leader.

\paragraph{Experiment Setup.}
These experiments simulate trading of 50 assets.
Transactions are charged a fee of $\varepsilon=2^{-15}(0.003\%)$.  We
set $\mu=2^{-10}$, guaranteeing full execution of all orders
priced below $0.999$ times the auctioneer's price.  The initial database contains 10 million accounts.  \Tat{} never timed out,
and typically required fewer than 1,000 iterations.

Transactions are generated according to a synthetic data model---every
set of 100,000 transactions is generated as though the assets have
some underlying valuations, and users trade a random asset pair using
a minimum price close to the underlying valuation ratio.  The
valuations are modified (via a geometric Brownian motion) after every set.
Accounts are drawn from a power-law distribution. \XXX{say param = 0.0000001?}

Each set is split into four pieces, with one piece given to each replica.  Replicas load these sets sequentially
 and broadcast each set to every other replica.  
Each replica adds received transactions to its pool of unconfirmed transactions.

Replicas propose blocks of roughly 500,000 transactions.
In these experiments, each block consists of roughly 350,000--400,000
new offers, 100,000--150,000 cancellations, 10,000--20,000 payments,
and a small number of new accounts.  We generate 5,000 sets of input
transactions.  Some of these transactions conflict with each other and
are discarded by \SPEEDEX{} replicas.  Each experiment runs
for 700--750 blocks.

Every five blocks, 
the exchange commits its state to persistent storage in the background (via LMDB \cite{lmdb}, \S \ref{sec:persistence}).

\paragraph{Performance Measurements.}
Fig.~\ref{fig:e2e} plots the end-to-end transaction throughput rate of
\SPEEDEX{} as the number of worker threads inside \SPEEDEX{} increases.  
The x-axis plots the number of open offers on the exchange.

\begin{figure}  
\centering
\includegraphics[width=\columnwidth]{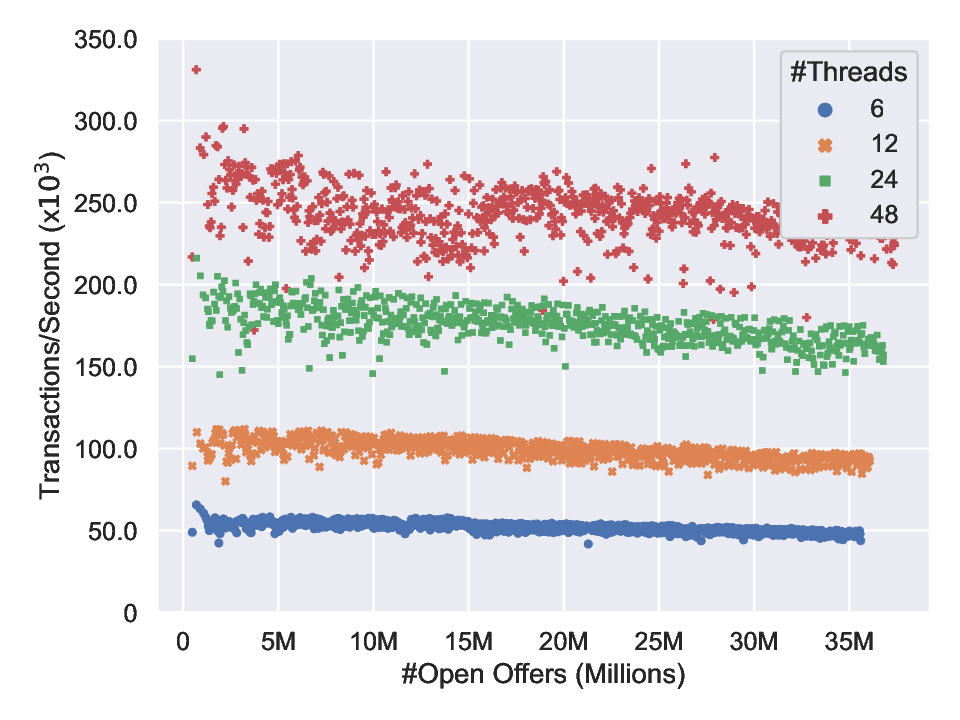}
\caption{
    Transactions per second on \SPEEDEX{}, plotted over the number of open offers.
    \label{fig:e2e}
}
\end{figure}

Most importantly, Fig.~\ref{fig:e2e} demonstrates that \SPEEDEX{} can efficiently use its available CPU hardware.  The speedup 
is near-linear, until the number of threads approaches the number of CPU cores---from 6 to 12, $\sim 1.9x$, from 12 to 24, $\sim 1.8x$, and from 24
to 48, $\sim 1.4x$.  The thread counts are only for the number of threads directly for \SPEEDEX{}'s critical path, and not for many of the tasks that the implementation must perform in the background, such as logging data to persistent storage (logging the account database uses 16 threads), consensus, and garbage collection, and these threads
begin to contend with \SPEEDEX{} as the number of \SPEEDEX{} worker threads increases.

Secondly, Fig.~\ref{fig:e2e} demonstrates the scalability of \SPEEDEX{} with respect to the number of open offers.  
The number of open offers \SPEEDEX{}
works with in these experiments is already quite large, but most importantly, as the number of open offers goes from 0 to the 10s of millions,
\SPEEDEX{}'s transaction throughput falls by only $\sim 10\%$.  
This slowdown is primarily derived from a \Tat{} optimization (the precomputation
outlined in \S \ref{sec:tatimpldetails}).  \Tat{} is the one part of \SPEEDEX{} that 
cannot be arbitrarily parallelized, so we design our implementation towards making it as fast as possible.  
An implementation might skip this work in some parameter regimes.

To focus on the performance of \SPEEDEX{}, Figs. \ref{fig:prodscale}
and \ref{fig:valscale} plot the time to propose and execute blocks,
and to validate and execute proposals, respectively, when we disable signature
verification (which is trivial to parallelize).
First, note that both proposal and validation scale with the number of threads; validation scales better
than proposal due to the aforementioned \Tat{} optimization. Second, note that validating and executing a proposal from another replica is substantially faster
than proposing a block; this lets a replica that is somehow delayed catch up.

The runtime variation in Fig.~\ref{fig:prodscale} results from the
fact that \SPEEDEX{} without signature verification runs too quickly for our
persistent logging implementation.

\begin{figure}  
\centering
\includegraphics[width=\columnwidth]{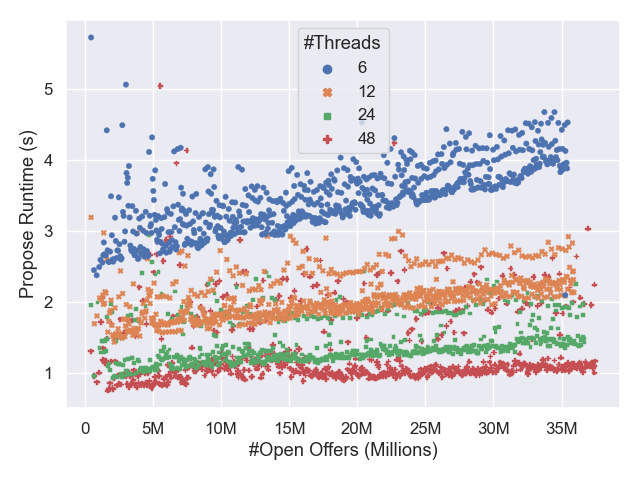}

\caption{
  	Time to propose and execute a block, plotted over the number of open offers.
  	\label{fig:prodscale}
}

\end{figure}


\begin{figure}  
\centering
\includegraphics[width=\columnwidth]{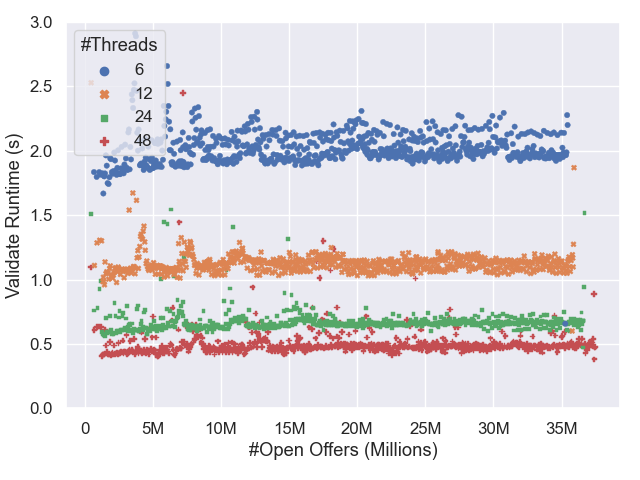}

\caption{
  	Time to validate and execute a proposal, plotted over the number of open offers (measurements from one replica).
  	\label{fig:valscale}
}
\end{figure}

\SPEEDEX{} is not a consensus protocol, and these experiments (one consensus invocation
every few seconds) do not come close to stressing the consensus throughput of Hotstuff.
However, network bandwidth requirements necessarily scale (at least) linearly with transaction rate.
Recent work, such as \cite{danezis2022narwhal,keidar2021all,yang2022dispersed},
develops consensus protocols that maximally use available network bandwidth.  
However, integrating \SPEEDEX{} with any consensus protocol requires understanding
the tradeoffs between batch size, transaction rate, and consensus frequency.  Fig.~\ref{fig:blocksz}
plots this tradeoff running \SPEEDEX{} on the same transaction workload as in Fig.~\ref{fig:e2e}.
We also ran \SPEEDEX{} with more replicas on different hardware
and observed the same scalability trends, as outlined in \S
\ref{sec:10rep} (albeit with lower overall throughput on weaker hardware).

\begin{figure}  
\centering
\includegraphics[width=\columnwidth]{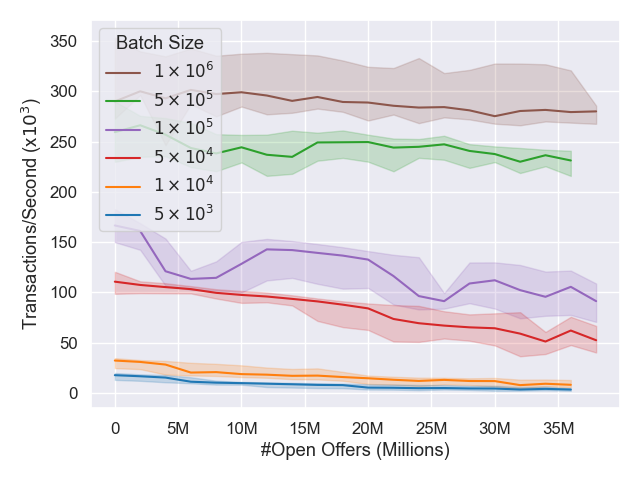}

\caption{
    Median transaction rates, varying block size and number of open offers (grouped into buckets of 2M).  Shaded areas plot 10th to 90th percentiles.
    \label{fig:blocksz}
}
\end{figure}


\paragraph{Conclusions.}
To reiterate, \SPEEDEX{} achieves these transaction rates
while operating fully on-chain,
with no offchain rollups and no sharding of the exchange's
state.  To make \SPEEDEX{} faster, one can simply give it
more CPU cores, without changing the transaction semantics or user
interface.  This scaling property is unique among existing DEXes.

\subsection{Alternative Scaling Techniques}

\paragraph{Traditional Exchange Semantics.}
The core logic of just an exchange system can be implemented extremely efficiently with almost no code.
The logic of the constant product market maker UniswapV2 \cite{uniswapv2},
for example, is less than 10 lines of simple arithmetic code.  
An orderbook-based
exchange requires more code but can still be made very fast, as most operations modify only a small number of 
data objects.  We implemented a bare-bones orderbook exchange with two assets using the same data structures as in
\SPEEDEX{}---each transaction checks the orderbook for a matching
offer or offers and either makes appropriate transfers or adds
the new offer to the orderbook.  These operations are extremely fast
when the number of accounts is small; our implementation
runs $\sim 1.7$~million of these transactions per second when there are only 100 accounts.  However, every database lookup becomes slower
as the as the number of accounts grows; when there are 10 million accounts in the database (as in the above experiments),
throughput falls 8x to $\sim 210,000$ per second.  
Yet that is before adding all of the other \SPEEDEX{} features
one needs in a real DEX, such as state hashes, transaction fees,
 structures for simple payment verification \cite{nakamoto:bitcoin}, replication, or durable logging.
The scalability of the full \SPEEDEX{} implementation lets it surpass that rate even when slowed down by all of these features.

Note that every orderbook operation affects every subsequent
transaction---each transaction influences
the exchange rate observed in the next transaction---and as such, 
their execution cannot be parallelized.  \SPEEDEX{}'s design, therefore, enables parallel execution of what would otherwise
be a strictly serial workload.
To isolate the effect of \SPEEDEX{}'s parallelizable semantics on its transaction throughput, we therefore turn to
a workload that does not touch the DEX at all---one where every transaction is a payment between random accounts.

\paragraph{Optimistic Concurrency Control.}
A widely explored class of alternative designs for parallel transaction execution use optimistic concurrency control,
and of these approaches the most closely related state of the art design appears to be Block-STM \cite{gelashvili2022block},
which is deployed in Aptos \cite{aptoswhitepaper}.
This approach
optimistically executes batches of transactions, retrying after conflicts as necessary.


We therefore design the measurements of Fig.~\ref{fig:blockstm} to mirror the experiments in \cite{gelashvili2022block}.
The ``Aptos p2p'' transactions
in \cite{gelashvili2022block} are payments between two random accounts, and consist of 8 reads of 5 writes.
Each of our payments consists of two data reads (source account public key and last committed sequence number), two atomic compare\_exchange operations (subtract payment and fee from source), an atomic fetch\_xor (reserve sequence number), and an atomic fetch\_add (add payment to destination)---implemented without atomics, this would be 6 reads and 4 writes.  All payments are of the same asset.


\begin{figure}
\centering
\includegraphics[width=\columnwidth]{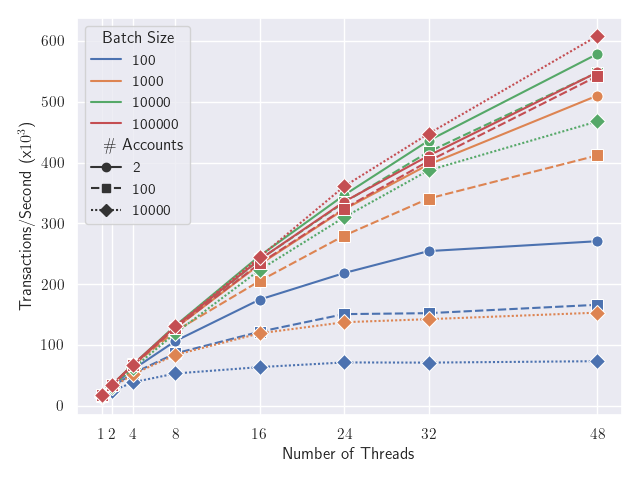}

\caption{
    Throughput of \SPEEDEX{} on batches of payment transactions with varying thread counts
 (average of 100 trials).
    \label{fig:blockstm}
}

\end{figure}

Fig.~\ref{fig:blockstm} plots the throughput rates of \SPEEDEX{} on this transaction workload for the parameter settings measured 
in Block-STM (Figs. 7 and 8, \cite{gelashvili2022block}).  Note that for large batch sizes,
the transaction throughput is largely independent of the number of accounts,
even though every transaction in the two account setting contends with every other transaction.  Furthermore, unlike Block-STM, \SPEEDEX{} achieves near-linear scalability on sufficiently large batches.
For small batch sizes, a large number of accounts actually slows down \SPEEDEX{}, largely due to increased sensitivity to cache performance and 
our system's NUMA (two socket) architecture on small timescales.  We also ran this experiment on a single-socket system (an AWS c5a.16xlarge,
as in \cite{gelashvili2022block}), and found only negligle impact of the number of accounts on throughput.
Fig.~\ref{fig:blockstm} was run with hyperthreading disabled, to compare against Block-STM experiments.
The rest of our experiments were run with hyperthreading enabled (because of the many background tasks in \SPEEDEX{});
enabling hyperthreading on this payments workload causes a negligible performance degradation for large batches (approximately 1-6\%),
and a larger (up to 25\%) on small batches.
As a baseline, \S \ref{sec:bstm_baseline} graphs the performance of Block-STM on these parameter settings on our hardware.

We also ran \SPEEDEX{} on an only-payments workload with 10 million accounts and 50 assets, and measured a throughput of approximately 375k, 215k, 114k, and 60k transactions per second using 48, 24, 12, and 6 threads, respectively (a 34.8x, 20.0x, and 10.6x, and 5.6x speedup over the single-threaded measurement).  
We disabled data persistence for these trials---again, the logging off of the critical path contends 
with \SPEEDEX{} at these transaction rates, especially for payment transactions that modify
two accounts, instead of just one (as when creating an offer).  
The throughput reached 255k transactions per second with data persistence enabled.

\paragraph{Production Systems.}
Finally, we ran the Ethereum Virtual Machine 
(Geth 1.10 \cite{geth}) 
on a workload of UniswapV2 \cite{uniswapv2} transactions, and measured a rate of $\sim 3000$ transactions per second 
(a result in line
with other Ethereum benchmarks \cite{wang2020prism}).  
The Loopring exchange, built as an L2 rollup on Ethereum,
claims a maximum rate of $\sim 2000$ per second \cite{loopringmainpage}, a number calculated
from Ethereum's per-block computation limit \cite{loopringdesigndoc}, which is in turn set based on 
the real computational cost of serial transaction execution \cite{perez2019broken,vitalikgaslimit,chen2017adaptive,aldweesh2021opbench}.
Precise measurements of the Stellar blockchain's orderbook DEX
\XXX{I don't think it's deanonymizing to mention it by name here}
suggest that its implementation
could handle $\sim 4000$ DEX trades per second.

\section{Design Limitations and Mitigations}
\label{sec:limitations}

\paragraph{Latency.}
Batch trading inherently introduces latency (between order submission and order execution) not present
on traditional, centralized exchanges, simply because an order cannot execute until a batch has been closed
and clearing prices have been computed.  This latency is already present in a blockchain context (a transaction
is not finalized until the consensus protocol adds it to a block), so in this context, \SPEEDEX{} introduces
no additional latency. 

The latency may have downstream economic effects.  Market-making may
be more (or less) profitable
operating in a batch system, which could lead to reduced (or increased) liquidity.  
Budish et al.~\cite{aquilina2020quantifying}
argue that batch trading (between 2 assets) would reduce costs for market-makers, which could lead to increased liquidity.  However,
they study a higher batch frequency (approximately once per millisecond);
our lower batch frequency is less studied (see Q9, \cite{budishpubliccomment}).

\paragraph{\Tat{} Nondeterminism.}
The algorithms evaluated in \S \ref{sec:tateval}
can be viewed as a randomized approximation scheme, which raises the question of whether a malicious operator
can manipulate the approximation.  Note that the level of approximation error (as defined in \S \ref{apx:error}) can be measured,
so non-anonymous node operators can be penalized for malfeasance.
When regulation is not possible, \Tat{} can be made deterministic by fixing a set of control parameters for each instance
and choosing the solution with the lowest approximation error (or
lowest unrealized utility, \S \ref{sec:imbalanced}).
The Stellar implementation
uses a static set of control parameters with one \Tat{} instance.
Node operators could also
compete to compute prices accurately, as in \cite{dfusion}.

\paragraph{Nondeterministic Overdraft Prevention.}
\SPEEDEX{} needs to prevent an account from spending more than its balance of an asset.
As discussed in \S \ref{sec:commutativesemantics},
our implementation considers a proposal valid only if no account is overdrafted
after applying the block.  This design complicates pipelining of consensus with execution,
gives plausible deniability for delaying transactions, and is incompatible with cryptographic commit-reveal schemes.

Instead, given a fixed block of transactions,
an implementation could first compute, for each account, the total amount of each asset
debited from the account (before applying any credits).  If there is any possibility for an account
to overdraft in this block, then this amount must exceed the account's balance.
As such, to ensure that no accounts overdraft, the implementation can remove all transactions from accounts that might overdraft.
Note that this determination is made on a per-account basis, before any transactions are removed,
so this filtering requires only one, parallelizable pass over a block of transactions,
adding only minimal overhead (\S \ref{sec:filtering}).  Furthermore, only accounts
that attempt to overdraft are affected.  

Other commutativity conflicts, such
as cancelling an offer twice or reusing a sequence number, can be handled similarly,
by removing all transactions involved in these conflicts.  Note that using these filtering criteria,
removing a transaction cannot cause a commutativity conflict. 
\Thestellarbc{} plans this approach. 


\paragraph{Other Types of Front-Running.}
The set of pending transactions is public in many blockchains.  
One might estimate the clearing prices
in a future batch and arbitrage the batch against low-latency markets.  This could lead to negative externalities
(see \cite{budish2014implementation}, footnote 1), and could merit combining \SPEEDEX{} with a commit-reveal scheme
such as \cite{zhang2022flash,clineclockwork}.  
Such a design requires the deterministic overdraft-prevention scheme above.

Malicious nodes might also delay transactions.  
An implementation could buffer several blocks of transactions from a consensus protocol
into a single \SPEEDEX{} batch.  If even one of these consensus blocks is from
an honest replica (that does not censor transactions), a user could ensure that their transaction 
cannot be delayed from one \SPEEDEX{} batch to the next (by broadcasting to all replicas).  
This requires a consensus protocol
with sufficient \textit{chain quality} \cite{garay2015bitcoin}.
Alternatively, some DAG-based protocols \cite{danezis2022narwhal,keidar2021all}
simultaneously commit many blocks of transactions from different replicas.
Grouping these blocks into one \SPEEDEX{} batch, instead of ordering them arbitrarily,
achieves the same censorship-resistance property.
These designs would likewise require the deterministic overdraft-prevention scheme.


\paragraph{Linear Program Scalability.}
The runtime to solve the linear program increases dramatically beyond 60-80 assets,
limiting the number of assets in a \SPEEDEX{} batch.
A deployment could take advantage of market structure---there are many
assets (e.g., stocks) in the real world, but most are
linked to one geographic area or economy, and are primarily
traded against one currency. 
We formally show in \S \ref{sec:decomposition} that in this case, 
the price computation 
problem can
be decomposed between core pricing (i.e., numeraire) currencies and the
external stocks.  
After running \Tat{} on the core currencies, each stock can be priced
on its own relative to a core currency.
This lets \SPEEDEX{} support real-world transaction patterns 
with an arbitrary number of assets and a small number of pricing currencies.

\S \ref{apx:lp} points out that setting the commission to $0$ simplifies the linear program
to one that is more algorithmically tractable at larger numbers of assets.
The Stellar implementation uses this version of the linear program.

\paragraph{Limited Trade Types.}
Trades on \SPEEDEX{} are limited to trades selling a fixed amount of one asset in 
exchange for as much as possible of another.
\SPEEDEX{} does not implement offers to buy a fixed amount of an asset
in exchange for as little as possible of another.
These buy offers admit the same logarithmic transformation as in \S \ref{sec:logtransform},
but make the price computation
 problem PPAD-hard, a complexity class that is widely conjectured to
 be algorithmically intractable in polynomial time (\S \ref{sec:ppad}).
One could compute prices using only sell offers and integrate buy offers in the linear programming step.

Ramseyer et al.~\cite{ramseyer2022batch} show how to integrate
Constant Function Market Makers (CFMMs) \cite{angeris2019analysis}
into the exchange market framework and \Tat{}.
The Stellar implementation uses this integration with its own CFMMs.

\section{Implementation Details} 

The standalone \SPEEDEX{} evaluated in \S\ref{sec:tateval} and
\S\ref{sec:scalable} is a blockchain using
HotStuff~\cite{yin:hotstuff} for consensus.  
A leader node periodically mints a new block from the memory pool and feeds
the block to the consensus algorithm.  Other nodes apply the block
once it has been finalized by consensus. 
A faulty node can propose an invalid block.
Consensus may finalize invalid blocks, but these blocks have no effect when applied.

The implementation is 
available open source at \speedexgithub{} and consists of $\sim$30,000 lines of
C++20, plus $\sim$5,000 lines for our Hotstuff implementation.
It uses Intel's TBB library~\cite{tbb} to manage parallel work scheduling,
the GNU Linear Programming
Kit~\cite{makhorin2008glpk} to solve linear programs, and
LMDB~\cite{lmdb} to manage data persistence (for crash recovery).

Exchange state is stored in a collection of custom
Merkle-Patricia tries; hashable tries allow nodes to efficiently 
compare state (to check consensus) 
and build short state proofs.


The rest of this section outlines additional design choices built
into \SPEEDEX{}.  Additional design choices in \S \ref{sec:additional_implementation}.
All optimizations (save \S\ref{sec:integration}) 
are implemented in the evaluated system.

\subsection{Blockchain Integration}
\label{sec:integration}

An existing blockchain with its own (non-commutative) semantics can integrate
\SPEEDEX{} by splitting block
execution into phases: first applying all \SPEEDEX{} transactions
(in parallel), then applying legacy transactions (sequentially).
\SPEEDEX{}'s scalability lets a blockchain charge only a marginal
 fee for transactions (to prevent spam).  A proof-of-stake integration of
\SPEEDEX{} could penalize faulty proposals.

\SPEEDEX{}'s economic properties are desirable independent of
scalability.  The initial Stellar implementation uses
two-phase blocks, but the \SPEEDEX{} phase is still implemented
sequentially.  As a result, the initial implementation is simple (adding
only $\sim$5,000 lines to the server daemon) and the primary
benefits are economic.  However, because the transaction semantics are commutative,
engineers can work to parallelize the implementation as needed,
without formally upgrading the protocol (which is more difficult than
releasing a software update).

\subsection{Caches and \Tat{}}
\label{sec:tatimpldetails}

\Tat{} spends most of its runtime computing demand queries.
 Each query consists of several binary searches over large lists, so
the runtime depends heavily on memory latency and cache performance.  
Towards the end of \Tat{}, when the algorithm takes small steps, one query reads almost exactly the same memory locations as the previous query, so the cache miss rate can be extremely low.

Instead of querying the offer tries directly, we precompute for each asset pair a list that records, for each unique limit price, the amount of an asset offered for sale below the price (\S \ref{sec:tatpreprocessing}).  Laying out this information contiguously improves cache performance.

We also execute the binary searches of one \Tat{} iteration in parallel.
One primary thread computes price updates and wakes helper threads.  However, each round of \Tat{} is already fast on one thread---with $50$ assets and millions of offers, one round takes $400$--$600\mu s$.
To minimize synchronization latency and avoid letting the kernel migrate threads between cores (which harms cache performance), we operate these helper threads via spinlocks and memory fences.
In the tests of \S\ref{sec:tateval}, we see minimal benefit beyond 4--6 helper threads, but this suffices to reduce each query to $50$--$150\mu s$. 

Finally, there is a tradeoff between running more copies of \Tat{} with different settings and the performance of each copy.    More concurrent replicas of \Tat\ mean more cache traffic and higher cache miss rates.

We accelerate the rest of \Tat{} by exclusively using fixed-point arithmetic (rather than floating-point).



\subsection{Batched Trie Design}

Our tries use a fan-out
of 16 and hash nodes with the 32-byte BLAKE2b cryptographic hash \cite{blake2}.
Both the layout of trie nodes and the work partitioning are designed
to avoid having multiple threads writing to the same cache line.

The commutativity of \SPEEDEX{}'s semantics opens
up an efficient design space for our data structures, which need only
materialize 
state changes once per block.  
Tries need only recompute
a root hash once per block, for example, instead of after every modification.
Threads locally build tries recording insertions,
which are merged together in one batch operation (which is also parallelizable
by redividing local tries into disjoint key ranges).
Deletions (when offers are cancelled)
are implemented via atomic flags
on trie nodes; to enable efficient cleanup of deleted nodes, each node stores the number of deleted nodes beneath it.
To facilitate efficient work distribution, each node also stores the number of leaves below it. 

\SPEEDEX{} builds in every block
an ephemeral trie that logs which accounts are modified;
specifically, it maps an account ID to a list of its transactions and
to the IDs of transactions from other accounts that modified it.
This enables
construction of short proofs of account state changes.
This trie also uses the same key space as the main
account state trie, which lets \SPEEDEX{} use the ephemeral trie to 
efficiently divide work on the (much larger) account trie.

Memory allocation for an emphemeral trie is trivial because no ephemeral trie node
is carried over from one block to the next.
Every thread has a local
arena, allocation simply increments an arena index, and garbage collection
means just setting the index to 0 at the end of a block.
We find it to be not a problem if some of the memory in the arena is wasted;
we allocate the potential children of an ephemeral trie node contiguously,
so a node need only store a 4-byte base pointer (buffer index) and a bitmap
denoting the active children.  This lets each ephemeral trie node fit in one
64-byte cache line.

\XXX{I don't really have any evidence for this, and the effect is marginal if 
each account sends only a small number of txs (and the account db is not sorted 
by account number -- it used to be, in early experiments for the SOSP paper, before I realized
that that would bias results).

Building this
trie implicitly sorts transactions by account ID.  Block proposals that are sorted
are slightly faster to execute, as there is less cache contention between threads working
on contiguous segments of sorted transaction list (this is not checked by nodes,
so a faulty block producer could slightly reduce performance by issuing unsorted but otherwise valid proposals).
}



\section{Related Work}
\label{sec:relwork}
\paragraph{Blockchain Scaling.}
Our approach is inspired by Clements et al.~\cite{clements2015scalable}, who improve performance in the Linux kernel through commutative syscall semantics.

Chen et al.~\cite{chen2021forerunner} speculatively execute Ethereum transactions to achieve a $\sim$6x overall execution speedup.
Other approaches to concurrent execution include optimistic concurrency control~\cite{gelashvili2022block,sealevel},
invalidating conflicting transactions~\cite{hyperledgerfabric}, broadcasting
conflict resolution information~\cite{dickerson2019adding,anjana2019efficient}, or partitioning transactions into nonconflicting sets~\cite{bartoletti2020true,yu2018parallel,stellartxfootprintcap}.
This problem is related to that of building deterministic databases and software transactional memory~\cite{prasaad2020handling,xia2019litm,thomson2012calvin}.  Li et al.~\cite{li2012making} build a distributed database where some transactions are tagged as commutative.

Empirical work~\cite{saraph2019empirical,garamvolgyi2022utilizing} finds that a small number of Ethereum contracts, often token contracts,
are historically responsible for the majority of conflicts that limit optimistic execution.  
A recent Solana~\cite{yakovenko2018solana} outage
resulted in part when many transactions conflicted on one orderbook contract~\cite{solanaoutage}.


Project Hamilton~\cite{opencbdc} develops a CBDC payments platform.  The authors find that totally-ordered semantics become a performance bottleneck.  Unlike \SPEEDEX{}, which stores asset balances in accounts,
this system requires the more restrictive unspent transaction output (UTXO) model.

Some systems move transaction execution off-chain, into so-called ``Layer-2'' networks,
each with different capabilities, perfomance, interoperability, and security tradeoffs
\cite{poon2016bitcoin,kalodner2018arbitrum,polygon,optimisticrollup,poon2017plasma,zkrollup,loopringdesigndoc}.
Other blockchains
\cite{ethsharding,skychain,polkadot,hyperledgerfabric,nearsharding}
split state into concurrently-running shards,
at the cost of complicating cross-shard transactions.

\paragraph{(Distributed) Exchanges.}
Budish et al.~\cite{budish2015high,budish2014implementation} argue that exchanges should process orders in batches to combat automated arbitrage and improve liquidity. 

Other defenses against front-running include cryptographic commit-reveal schemes
\cite{clineclockwork,zhang2022flash,schmid2021secure,goes2021anoma} or ``fair'' ordering schemes
that assume a bounded fraction of malicious nodes \cite{kelkar2020order,zhang2020byzantine,chainlink}.
The front-running attacks that \SPEEDEX{} prevents are not guaranteed to be blocked in these schemes.  For example,
a replica might plausibly front-run a transaction in~\cite{kelkar2020order} by investing in lower-latency network links between itself
and other replicas than other replicas have with each other, and commit-reveal schemes do not prevent statistical front-running (guessing
the contents of a transaction).  
\XXX{@ashish what do you think}

Some blockchains build limit-order DEX mechanisms natively~\cite{stellar,binancedexoldintro} or as smart contracts~\cite{serum}.
Smart contracts known as Automated Market-Makers
(AMMs)~\cite{uniswapv2,bancor,egorov2019stableswap,balancer} facilitate passive market-making on-chain~\cite{angeris2019analysis}.

0x and a past version of Loopring~\cite{warren20170x,loopring} allow settlement on-chain of orders matched off-chain, in pairs or in cycles.
StarkEx~\cite{starkware,ben2018scalable} gives cryptographic tools to prove correctness of an off-chain exchange.

CoWSwap \cite{dfusion,cowswapproblem} uses mixed-integer programming
to clear offers in batches of at most 100 \cite{dfusionbatch}. Solvers compete to produce the best solution. The former Binance DEX \cite{binancedexold} computed per-asset-pair prices in each block.  The Penumbra DEX
uses homomorphic encryption to privately make batch swaps against an AMM, but cannot let users set limit prices \cite{penumbraswap}.

\paragraph{Price Computation.}
Our algorithms solve instances of the special case of the Arrow-Debreu exchange market \cite{arrow1954existence} where every utility function is linear.
Equilibria can be approximated in these markets using combinatorial algorithms such as those of Jain et al.~\cite{jain2003approximating} and Devanur et al.~\cite{devanur2003improved} and exactly via the ellipsoid method and simultaneous diophantine approximation \cite{jain2007polynomial}.  Duan et al.~\cite{duan2015combinatorial} construct an exact combinatorial algorithm, which Garg et al.~\cite{garg2019strongly} extend to an algorithm with strongly-polynomial running time.
Ye~\cite{ye2008path} gives a path-following interior point method, and Devanur et al.~\cite{devanur2016rational} construct a convex program.
Codenotti et al.~\cite{codenotti2005polynomial,codenotti2005market} show that a version of the \Tat{} process \cite{arrow1959stability2}
 converges to an approximate equilibrium in polynomial time.
Garg et al.~\cite{garg2019auction} give another algorithm based on demand queries.

\section{Conclusion}

\SPEEDEX{} is a fully on-chain DEX that can scale to more than 200,000 transactions per second with tens of millions of open trade offers.
\SPEEDEX{} requires no offchain rollups and no sharding of the exchange's logical state.  
To make \SPEEDEX{} faster, one can simply give \SPEEDEX{} more CPU cores, without changing the semantics or user interface.  
Because \SPEEDEX{} operates as a logically-unified platform, instead of a sharded network, \SPEEDEX{} does not fragment liquidity between subsystems and creates no cross-rollup arbitrage.

In addition, \SPEEDEX{} displays several independently useful economic properties.  
It eliminates risk-free front running; any user who can get their offer to the exchange before a block cutoff time can get the same exchange rate as every other trader.
\SPEEDEX{} also eliminates internal arbitrage, which disincentivizes network spam.
And finally, \SPEEDEX{} eliminates the need to transact through
intermediate, reserve currencies, 
instead allowing a user to trade directly from one asset to any other asset listed on the exchange, 
with the same or better market liquidity as the trader would have gotten by trading through a series of intermediate currencies.

\SPEEDEX{} is free software, available at \speedexgithub{}.


\if\anonymize1

\else
\section*{Acknowledgements}


This research was supported by the 
Stanford Future of Digital Currency Initiative, 
the Stanford Center for Blockchain Research,
the Office of Naval Research (ONR N00014-19-1-2268),
and the Army Research Office (76412CSII).
The Stellar blockchain integration was funded by
and performed at the Stellar Development Foundation.

The authors wish to thank the anonymous reviewers
and our shepherd Siddhartha Sen
for their valuable feedback,
and
thank CloudLab~\cite{cloudlab}
for providing resources for our experiments.

\fi

\bibliographystyle{plain}
\bibliography{osdi.bib}

\begin{thebibliography}{100}

\bibitem{binancedexfees}
Binance chain docs - fees.
\newblock
  \url{https://web.archive.org/web/20200617014623/https://docs.binance.org/guides/concepts/fees.html}.
\newblock Accessed 10/18/2022.

\bibitem{binancedexoldintro}
Binance chain docs - introduction.
\newblock
  \url{https://web.archive.org/web/20200616190856/https://docs.binance.org/guides/intro.html}.
\newblock Accessed 10/18/2022.

\bibitem{binancedexold}
Binance chain docs - match steps and examples.
\newblock
  \url{https://web.archive.org/web/20200617065916/https://docs.binance.org/match-examples.html}.
\newblock Accessed 10/18/2022.

\bibitem{coinbasefees}
Coinbase pricing and fees disclosures.
\newblock
  \url{https://help.coinbase.com/en/coinbase/trading-and-funding/pricing-and-fees/fees}.
\newblock Accessed 04/10/2021.

\bibitem{cowswapproblem}
Cow protocol overview: The batch auction optimization problem.
\newblock
  \url{https://web.archive.org/web/20220614183101/https://docs.cow.fi/off-chain-services/in-depth-solver-specification/the-batch-auction-optimization-problem}.
\newblock Accessed 10/19/2022.

\bibitem{ethsharding}
Eth2 shard chains.
\newblock \url{https://ethereum.org/en/eth2/shard-chains/}.
\newblock Accessed 03/11/2021.

\bibitem{dfusion}
An exchange protocol for the decentralized web.
\newblock
  \url{https://web.archive.org/web/20220825164405/https://docs.gnosis.io/protocol/docs/introduction1/}
  and
  \url{https://github.com/gnosis/dex-research/blob/08204510e3047c533ba9ee42bf24f980d087fa78/dFusion/dfusion.v1.pdf}
  and
  \url{https://github.com/gnosis/dex-research/blob/c56235a3c79fbd85771760ca8826b757fb03eb1f/BatchAuctionOptimization/batchauctions.pdf}.

\bibitem{tbb}
Intel oneapi threading building blocks.
\newblock
  \url{"https://software.intel.com/content/www/us/en/develop/tools/oneapi/components/onetbb.html"}.
\newblock Accessed 5/6/2021.

\bibitem{dai}
The maker protocol: Makerdao’s multi-collateral dai (mcd) system.
\newblock \url{https://makerdao.com/en/whitepaper/}.
\newblock Accessed 12/14/2021.

\bibitem{geth}
Official go implementation of the ethereum protocol.
\newblock \url{https://github.com/ethereum/go-ethereum/tree/release/1.10}.
\newblock Accessed 10/13/2022.

\bibitem{optimisticrollup}
Optimistic rollups.
\newblock
  \url{https://docs.ethhub.io/ethereum-roadmap/layer-2-scaling/optimistic_rollups/}.
\newblock Accessed 03/11/2021.

\bibitem{penumbraswap}
The penumbra protocol: Sealed-bid batch swaps.
\newblock
  \url{https://web.archive.org/web/20220614034906/https://protocol.penumbra.zone/main/zswap/swap.html}.
\newblock Accessed 10/19/2022.

\bibitem{polygon}
Polygon lightpaper: Ethereum's internet of blockchains.
\newblock \url{https://polygon.technology/lightpaper-polygon.pdf}.
\newblock Accessed 12/6/2021.

\bibitem{serum}
Serum: Faster, cheaper, and more powerful defi.
\newblock \url{https://www.projectserum.com/}.
\newblock Accessed 12/6/2021.

\bibitem{starkware}
Starkex.
\newblock \url{https://starkware.co/product/starkex/}.

\bibitem{stellar}
Stellar.
\newblock \url{https://www.stellar.org/}.

\bibitem{usdc}
{USDC}: the world's leading digital dollar stablecoin.
\newblock \url{https://www.circle.com/en/usdc}.
\newblock Accessed 12/14/2021.

\bibitem{zkrollup}
Zk rollups.
\newblock
  \url{https://docs.ethhub.io/ethereum-roadmap/layer-2-scaling/zk-rollups/}.
\newblock Accessed 03/11/2021.

\bibitem{loopring}
Loopring: A decentralized token exchange protocol.
\newblock September 2018.

\bibitem{dfusionbatch}
Gpv2 objective criterion.
\newblock
  \url{https://web.archive.org/web/20211019155516/https://forum.gnosis.io/t/gpv2-objective-criterion/1254},
  April 2021.
\newblock Accessed 04/30/2021.

\bibitem{loopringdesigndoc}
Loopring 3 design doc.
\newblock
  \url{https://web.archive.org/web/20220411224154/https://github.com/Loopring/protocols/blob/master/packages/loopring_v3/DESIGN.md#results},
  2021.

\bibitem{aptoswhitepaper}
The aptos blockchain: Safe, scalable, and upgradeable web3 infrastructure.
\newblock
  \url{https://web.archive.org/web/20221020032330/https://aptos.dev/assets/files/Aptos-Whitepaper-47099b4b907b432f81fc0effd34f3b6a.pdf},
  August 2022.
\newblock Accessed 10/20/22.

\bibitem{loopringmainpage}
Loopring protocol.
\newblock
  \url{https://web.archive.org/web/20220409050852/https://loopring.org/#/protocol},
  April 2022.

\bibitem{uniswapv2}
Hayden Adams, Noah Zinsmeister, and Dan Robinson.
\newblock Uniswap v2 core.
\newblock 2020.

\bibitem{uniswapv3}
Hayden Adams, Noah Zinsmeister, Moody Salem, River Keefer, and Dan Robinson.
\newblock Uniswap v3 core.
\newblock Technical report, Tech. rep., Uniswap, 2021.

\bibitem{aldweesh2021opbench}
Amjad Aldweesh, Maher Alharby, Maryam Mehrnezhad, and Aad van Moorsel.
\newblock The opbench ethereum opcode benchmark framework: Design,
  implementation, validation and experiments.
\newblock {\em Performance Evaluation}, 146:102168, 2021.

\bibitem{hyperledgerfabric}
Elli Androulaki, Artem Barger, Vita Bortnikov, Christian Cachin, Konstantinos
  Christidis, Angelo De~Caro, David Enyeart, Christopher Ferris, Gennady
  Laventman, Yacov Manevich, et~al.
\newblock Hyperledger fabric: a distributed operating system for permissioned
  blockchains.
\newblock In {\em Proceedings of the thirteenth EuroSys conference}, pages
  1--15, 2018.

\bibitem{angeris2019analysis}
Guillermo Angeris, Hsien-Tang Kao, Rei Chiang, Charlie Noyes, and Tarun Chitra.
\newblock An analysis of uniswap markets.
\newblock {\em Cryptoeconomic Systems Journal}, 2019.

\bibitem{anjana2019efficient}
Parwat~Singh Anjana, Sweta Kumari, Sathya Peri, Sachin Rathor, and Archit
  Somani.
\newblock An efficient framework for optimistic concurrent execution of smart
  contracts.
\newblock In {\em 2019 27th Euromicro International Conference on Parallel,
  Distributed and Network-Based Processing (PDP)}, pages 83--92. IEEE, 2019.

\bibitem{aquilina2020quantifying}
Matteo Aquilina, Eric~B Budish, and Peter O'Neill.
\newblock Quantifying the high-frequency trading" arms race": A simple new
  methodology and estimates.
\newblock Technical report, Working Paper, 2020.

\bibitem{armijo1966minimization}
Larry Armijo.
\newblock Minimization of functions having {Lipschitz} continuous first partial
  derivatives.
\newblock {\em Pacific Journal of mathematics}, 16(1):1--3, 1966.

\bibitem{arrow1959stability2}
Kenneth~J Arrow, Henry~D Block, and Leonid Hurwicz.
\newblock On the stability of the competitive equilibrium, ii.
\newblock {\em Econometrica: Journal of the Econometric Society}, pages
  82--109, 1959.

\bibitem{arrow1954existence}
Kenneth~J Arrow and Gerard Debreu.
\newblock Existence of an equilibrium for a competitive economy.
\newblock {\em Econometrica: Journal of the Econometric Society}, pages
  265--290, 1954.

\bibitem{blake2}
Jean-Philippe Aumasson and Markku-Juhani~O Saarinen.
\newblock The blake2 cryptographic hash and message authentication code (mac).
\newblock {\em RFC 7693}, 2015.

\bibitem{bartoletti2020true}
Massimo Bartoletti, Letterio Galletta, and Maurizio Murgia.
\newblock A true concurrent model of smart contracts executions.
\newblock In {\em International Conference on Coordination Languages and
  Models}, pages 243--260. Springer, 2020.

\bibitem{ben2018scalable}
Eli Ben-Sasson, Iddo Bentov, Yinon Horesh, and Michael Riabzev.
\newblock Scalable, transparent, and post-quantum secure computational
  integrity.
\newblock 2018.

\bibitem{benzi2002preconditioning}
Michele Benzi.
\newblock Preconditioning techniques for large linear systems: a survey.
\newblock {\em Journal of computational Physics}, 182(2):418--477, 2002.

\bibitem{hackernoonfrontrunning}
Ivan Bogatyy.
\newblock Implementing ethereum trading front-runs on the bancor exchange in
  python.
\newblock
  \url{https://web.archive.org/web/20220119154606/https://hackernoon.com/front-running-bancor-in-150-lines-of-python-with-ethereum-api-d5e2bfd0d798},
  Aug 2017.

\bibitem{boyd2004convex}
Stephen~P Boyd and Lieven Vandenberghe.
\newblock {\em Convex optimization}.
\newblock Cambridge university press, 2004.

\bibitem{chainlink}
Lorenz Breidenbach, Christian Cachin, Benedict Chan, Alex Coventry, Steve
  Ellis, Ari Juels, Farinaz Koushanfar, Andrew Miller, Brendan Magauran, Daniel
  Moroz, et~al.
\newblock Chainlink 2.0: Next steps in the evolution of decentralized oracle
  networks.
\newblock \url{https://research.chain.link/whitepaper-v2.pdf}, 2021.
\newblock Accessed 12/14/2021.

\bibitem{budishpubliccomment}
Eric Budish.
\newblock Response to esma’s call for evidence: “periodic auctions for
  equity instruments” (esma70-156-785).
\newblock
  \url{https://ericbudish.org/wp-content/uploads/2022/03/response_esmas_call_evidence_periodic_auctions.pdf},
  January 2019.
\newblock Accessed 10/17/2022.

\bibitem{budish2014implementation}
Eric Budish, Peter Cramton, and John Shim.
\newblock Implementation details for frequent batch auctions: Slowing down
  markets to the blink of an eye.
\newblock {\em American Economic Review}, 104(5):418--24, 2014.

\bibitem{budish2015high}
Eric Budish, Peter Cramton, and John Shim.
\newblock The high-frequency trading arms race: Frequent batch auctions as a
  market design response.
\newblock {\em The Quarterly Journal of Economics}, 130(4):1547--1621, 2015.

\bibitem{budish2022flow}
Eric~B Budish, Peter Cramton, Albert~S Kyle, and Jeongmin Lee.
\newblock Flow trading.
\newblock {\em University of Chicago, Becker Friedman Institute for Economics
  Working Paper}, (2022-82), 2022.

\bibitem{vitalikgaslimit}
Vitalik Buterin.
\newblock A quick explanation of what the point of the eip 2929 gas cost
  increases in berlin is.
\newblock
  \url{{https://web.archive.org/web/20211017034159/https://www.reddit.com/r/ethereum/comments/mrl5wg/a_quick_explanation_of_what\_the_point_of_the_eip/}},
  April 2021.

\bibitem{castro:bfs}
Miguel Castro and Barbara Liskov.
\newblock Practical byzantine fault tolerance.
\newblock In {\em 3rd Symposium on Operating Systems Design and
  Implementation}, pages 173--186, New Orleans, LA, February 1999.

\bibitem{chen2017adaptive}
Ting Chen, Xiaoqi Li, Ying Wang, Jiachi Chen, Zihao Li, Xiapu Luo, Man~Ho Au,
  and Xiaosong Zhang.
\newblock An adaptive gas cost mechanism for ethereum to defend against
  under-priced dos attacks.
\newblock In {\em International conference on information security practice and
  experience}, pages 3--24. Springer, 2017.

\bibitem{chen2017complexity}
Xi~Chen, Dimitris Paparas, and Mihalis Yannakakis.
\newblock The complexity of non-monotone markets.
\newblock {\em Journal of the ACM (JACM)}, 64(3):1--56, 2017.

\bibitem{chen2021forerunner}
Yang Chen, Zhongxin Guo, Runhuai Li, Shuo Chen, Lidong Zhou, Yajin Zhou, and
  Xian Zhang.
\newblock Forerunner: Constraint-based speculative transaction execution for
  ethereum (full version).
\newblock 2021.

\bibitem{lmdb}
Howard Chu and Symas Corporation.
\newblock Lightning memory-mapped database manager (lmdb).
\newblock \url{http://www.lmdb.tech/doc/}.
\newblock Accessed 04/29/2021.

\bibitem{clements2015scalable}
Austin~T Clements, M~Frans Kaashoek, Nickolai Zeldovich, Robert~T Morris, and
  Eddie Kohler.
\newblock The scalable commutativity rule: Designing scalable software for
  multicore processors.
\newblock {\em ACM Transactions on Computer Systems (TOCS)}, 32(4):1--47, 2015.

\bibitem{clineclockwork}
Dan Cline, Thaddeus Dryja, and Neha Narula.
\newblock Clockwork: An exchange protocol for proofs of non front-running.

\bibitem{codenotti2005market}
Bruno Codenotti, Benton McCune, and Kasturi Varadarajan.
\newblock Market equilibrium via the excess demand function.
\newblock In {\em Proceedings of the thirty-seventh annual ACM symposium on
  Theory of computing}, pages 74--83, 2005.

\bibitem{codenotti2005polynomial}
Bruno Codenotti, Sriram~V Pemmaraju, and Kasturi~R Varadarajan.
\newblock On the polynomial time computation of equilibria for certain exchange
  economies.

\bibitem{daian2019flash}
Philip Daian, Steven Goldfeder, Tyler Kell, Yunqi Li, Xueyuan Zhao, Iddo
  Bentov, Lorenz Breidenbach, and Ari Juels.
\newblock Flash boys 2.0: Frontrunning, transaction reordering, and consensus
  instability in decentralized exchanges.
\newblock {\em arXiv preprint arXiv:1904.05234}, 2019.

\bibitem{danezis2022narwhal}
George Danezis, Lefteris Kokoris-Kogias, Alberto Sonnino, and Alexander
  Spiegelman.
\newblock Narwhal and tusk: a dag-based mempool and efficient bft consensus.
\newblock In {\em Proceedings of the Seventeenth European Conference on
  Computer Systems}, pages 34--50, 2022.

\bibitem{devanur2016rational}
Nikhil~R Devanur, Jugal Garg, and L{\'a}szl{\'o}~A V{\'e}gh.
\newblock A rational convex program for linear {Arrow}-{Debreu} markets.
\newblock {\em ACM Transactions on Economics and Computation (TEAC)},
  5(1):1--13, 2016.

\bibitem{devanur2003improved}
Nikhil~R Devanur and Vijay~V Vazirani.
\newblock An improved approximation scheme for computing {Arrow}-{Debreu}
  prices for the linear case.
\newblock In {\em International Conference on Foundations of Software
  Technology and Theoretical Computer Science}, pages 149--155. Springer, 2003.

\bibitem{diamond2016cvxpy}
Steven Diamond and Stephen Boyd.
\newblock Cvxpy: A python-embedded modeling language for convex optimization.
\newblock {\em The Journal of Machine Learning Research}, 17(1):2909--2913,
  2016.

\bibitem{dickerson2019adding}
Thomas Dickerson, Paul Gazzillo, Maurice Herlihy, and Eric Koskinen.
\newblock Adding concurrency to smart contracts.
\newblock {\em Distributed Computing}, pages 1--17, 2019.

\bibitem{domahidi2013ecos}
Alexander Domahidi, Eric Chu, and Stephen Boyd.
\newblock Ecos: An socp solver for embedded systems.
\newblock In {\em 2013 European Control Conference (ECC)}, pages 3071--3076.
  IEEE, 2013.

\bibitem{duan2015combinatorial}
Ran Duan and Kurt Mehlhorn.
\newblock A combinatorial polynomial algorithm for the linear {Arrow}--{Debreu}
  market.
\newblock {\em Information and Computation}, 243:112--132, 2015.

\bibitem{cloudlab}
Dmitry Duplyakin, Robert Ricci, Aleksander Maricq, Gary Wong, Jonathon Duerig,
  Eric Eide, Leigh Stoller, Mike Hibler, David Johnson, Kirk Webb, Aditya
  Akella, Kuangching Wang, Glenn Ricart, Larry Landweber, Chip Elliott, Michael
  Zink, Emmanuel Cecchet, Snigdhaswin Kar, and Prabodh Mishra.
\newblock The design and operation of {CloudLab}.
\newblock In {\em Proceedings of the {USENIX} Annual Technical Conference
  (ATC)}, pages 1--14, July 2019.

\bibitem{egorov2019stableswap}
Michael Egorov.
\newblock Stableswap-efficient mechanism for stablecoin liquidity.
\newblock {\em Retrieved Feb}, 24:2021, 2019.

\bibitem{stellarcbdc}
Stellar~Development Foundation.
\newblock Stellar for cbdcs.
\newblock
  \url{https://resources.stellar.org/hubfs/Stellar_CBDC_Whitepaper.pdf}.
\newblock Accessed 2/24/2023.

\bibitem{garamvolgyi2022utilizing}
P{\'e}ter Garamv{\"o}lgyi, Yuxi Liu, Dong Zhou, Fan Long, and Ming Wu.
\newblock Utilizing parallelism in smart contracts on decentralized blockchains
  by taming application-inherent conflicts.
\newblock {\em arXiv preprint arXiv:2201.03749}, 2022.

\bibitem{garay2015bitcoin}
Juan Garay, Aggelos Kiayias, and Nikos Leonardos.
\newblock The bitcoin backbone protocol: Analysis and applications.
\newblock In {\em Advances in Cryptology-EUROCRYPT 2015: 34th Annual
  International Conference on the Theory and Applications of Cryptographic
  Techniques, Sofia, Bulgaria, April 26-30, 2015, Proceedings, Part II}, pages
  281--310. Springer, 2015.

\bibitem{garg2019auction}
Jugal Garg, Edin Husi{\'c}, and L{\'a}szl{\'o}~A V{\'e}gh.
\newblock Auction algorithms for market equilibrium with weak gross substitute
  demands and their applications.
\newblock {\em arXiv preprint arXiv:1908.07948}, 2019.

\bibitem{garg2019strongly}
Jugal Garg and L{\'a}szl{\'o}~A V{\'e}gh.
\newblock A strongly polynomial algorithm for linear exchange markets.
\newblock In {\em Proceedings of the 51st Annual ACM SIGACT Symposium on Theory
  of Computing}, pages 54--65, 2019.

\bibitem{gelashvili2022block}
Rati Gelashvili, Alexander Spiegelman, Zhuolun Xiang, George Danezis, Zekun Li,
  Yu~Xia, Runtian Zhou, and Dahlia Malkhi.
\newblock Block-stm: Scaling blockchain execution by turning ordering curse to
  a performance blessing.
\newblock {\em arXiv preprint arXiv:2203.06871}, 2022.

\bibitem{gilad:algorand}
Yossi Gilad, Rotem Hemo, Silvio Micali, Georgios Vlachos, and Nickolai
  Zeldovich.
\newblock Algorand: Scaling byzantine agreements for cryptocurrencies.
\newblock In {\em Proceedings of the 26th Symposium on Operating Systems
  Principles}, SOSP '17, page 51–68, New York, NY, USA, 2017. Association for
  Computing Machinery.

\bibitem{goes2021anoma}
Christopher Goes, Awa~Sun Yin, and Adrian Brink.
\newblock Anoma: Undefining money.
\newblock 2021.

\bibitem{bancor}
Eyal Hertzog, Guy Benartzi, and Galia Benartzi.
\newblock Bancor protocol.
\newblock 2018.

\bibitem{stellartxfootprintcap}
Graydon Hoare.
\newblock Core advancement protocol 53: Smart contract data, Mar 2022.
\newblock
  \url{https://github.com/stellar/stellar-protocol/blob/master/core/cap-0053.md}.

\bibitem{helvetia}
BIS~Innovation Hub.
\newblock Project helvetia phase ii: Settling tokenised assets in wholesale
  cbdc.
\newblock \url{https://www.bis.org/publ/othp45.pdf}, 2022.
\newblock Accessed 2/24/2023.

\bibitem{jain2007polynomial}
Kamal Jain.
\newblock A polynomial time algorithm for computing an {Arrow}--{Debreu} market
  equilibrium for linear utilities.
\newblock {\em SIAM Journal on Computing}, 37(1):303--318, 2007.

\bibitem{jain2003approximating}
Kamal Jain, Mohammad Mahdian, and Amin Saberi.
\newblock Approximating market equilibria.
\newblock In {\em Approximation, Randomization, and Combinatorial
  Optimization.. Algorithms and Techniques}, pages 98--108. Springer, 2003.

\bibitem{kalodner2018arbitrum}
Harry Kalodner, Steven Goldfeder, Xiaoqi Chen, S~Matthew Weinberg, and Edward~W
  Felten.
\newblock Arbitrum: Scalable, private smart contracts.
\newblock In {\em 27th $\{$USENIX$\}$ Security Symposium ($\{$USENIX$\}$
  Security 18)}, pages 1353--1370, 2018.

\bibitem{keidar2021all}
Idit Keidar, Eleftherios Kokoris-Kogias, Oded Naor, and Alexander Spiegelman.
\newblock All you need is dag.
\newblock In {\em Proceedings of the 2021 ACM Symposium on Principles of
  Distributed Computing}, pages 165--175, 2021.

\bibitem{kelkar2020order}
Mahimna Kelkar, Fan Zhang, Steven Goldfeder, and Ari Juels.
\newblock Order-fairness for byzantine consensus.
\newblock In {\em Annual International Cryptology Conference}, pages 451--480.
  Springer, 2020.

\bibitem{kiraly2012efficient}
Zolt{\'a}n Kir{\'a}ly and P{\'e}ter Kov{\'a}cs.
\newblock Efficient implementations of minimum-cost flow algorithms.
\newblock {\em arXiv preprint arXiv:1207.6381}, 2012.

\bibitem{bancorfrontrunhack}
Yudi Levi.
\newblock Bancor’s response to today’s smart contract vulnerability.
\newblock
  \url{https://web.archive.org/web/20210525131534/https://blog.bancor.network/bancors-response-to-today-s-smart-contract-vulnerability-dc888c589fe4?gi=5e2d9c4ff877},
  Jun 2020.

\bibitem{li2012making}
Cheng Li, Daniel Porto, Allen Clement, Johannes Gehrke, Nuno Pregui{\c{c}}a,
  and Rodrigo Rodrigues.
\newblock Making $\{$Geo-Replicated$\}$ systems fast as possible, consistent
  when necessary.
\newblock In {\em 10th USENIX Symposium on Operating Systems Design and
  Implementation (OSDI 12)}, pages 265--278, 2012.

\bibitem{lokhava:stellar}
Marta Lokhava, Giuliano Losa, David Mazi\`{e}res, Graydon Hoare, Nicolas Barry,
  Eli Gafni, Jonathan Jove, Rafa\l{} Malinowsky, and Jed McCaleb.
\newblock Fast and secure global payments with stellar.
\newblock In {\em Proceedings of the 27th ACM Symposium on Operating Systems
  Principles}, SOSP '19, page 80–96, New York, NY, USA, 2019. Association for
  Computing Machinery.

\bibitem{opencbdc}
James Lovejoy, Cory Fields, Madars Virza, Tyler Frederick, David Urness, Kevin
  Karwaski, Anders Brownworth, and Neha Narula.
\newblock A high performance payment processing system designed for central
  bank digital currencies.

\bibitem{makhorin2008glpk}
Andrew Makhorin.
\newblock Glpk (gnu linear programming kit).
\newblock {\em http://www.gnu.org/s/glpk/glpk.html}, 2008.

\bibitem{balancer}
Fernando Martinelli and Nikolai Mushegian.
\newblock Balancer whitepaper.
\newblock Technical report, 9 2019.
\newblock Accessed 2/4/2022.

\bibitem{mazieres2015stellar}
David Mazieres.
\newblock The stellar consensus protocol: A federated model for internet-level
  consensus.
\newblock {\em Stellar Development Foundation}, 32, 2015.

\bibitem{nakamoto:bitcoin}
Satoshi Nakamoto.
\newblock Bitcoin: A peer-to-peer electronic cash system, 2008.
\newblock \url{http://bitcoin.org/bitcoin.pdf}.

\bibitem{dcep}
Working~Group on~E-CNY~Research and Development of~the People's Bank~of China.
\newblock Progress of research and development of {E-CNY} in china.
\newblock
  \url{http://www.pbc.gov.cn/en/3688110/3688172/4157443/4293696/2021071614584691871.pdf},
  Jul 2021.
\newblock Accessed 12/14/2021.

\bibitem{perez2019broken}
Daniel Perez and Benjamin Livshits.
\newblock Broken metre: Attacking resource metering in evm.
\newblock {\em arXiv preprint arXiv:1909.07220}, 2019.

\bibitem{poon2017plasma}
Joseph Poon and Vitalik Buterin.
\newblock Plasma: Scalable autonomous smart contracts.
\newblock {\em White paper}, pages 1--47, 2017.

\bibitem{poon2016bitcoin}
Joseph Poon and Thaddeus Dryja.
\newblock The bitcoin lightning network: Scalable off-chain instant payments,
  2016.

\bibitem{prasaad2020handling}
Guna Prasaad, Alvin Cheung, and Dan Suciu.
\newblock Handling highly contended oltp workloads using fast dynamic
  partitioning.
\newblock In {\em Proceedings of the 2020 ACM SIGMOD International Conference
  on Management of Data}, pages 527--542, 2020.

\bibitem{qin2021quantifying}
Kaihua Qin, Liyi Zhou, and Arthur Gervais.
\newblock Quantifying blockchain extractable value: How dark is the forest?
\newblock {\em arXiv preprint arXiv:2101.05511}, 2021.

\bibitem{ramseyer2022batch}
Geoffrey Ramseyer, Mohak Goyal, Ashish Goel, and David Mazi{\`e}res.
\newblock Batch exchanges with constant function market makers: Axioms,
  equilibria, and computation.
\newblock {\em arXiv preprint arXiv:2210.04929}, 2022.

\bibitem{reijsbergen2020exploiting}
Dani{\"e}l Reijsbergen and Tien Tuan~Anh Dinh.
\newblock On exploiting transaction concurrency to speed up blockchains.
\newblock In {\em 2020 IEEE 40th International Conference on Distributed
  Computing Systems (ICDCS)}, pages 1044--1054. IEEE, 2020.

\bibitem{saraph2019empirical}
Vikram Saraph and Maurice Herlihy.
\newblock An empirical study of speculative concurrency in ethereum smart
  contracts.
\newblock {\em arXiv preprint arXiv:1901.01376}, 2019.

\bibitem{solanaoutage}
Leopold Schabel.
\newblock Reflections on solana's sept 14 outage.
\newblock
  \url{https://web.archive.org/web/20211104012332/https://jumpcrypto.com/reflections-on-the-sept-14-solana-outage/},
  Oct 2021.
\newblock Accessed 12/7/2021.

\bibitem{schmid2021secure}
Noah Schmid, Christian Cachin, Orestis Alpos, and Giorgia Marson.
\newblock Secure causal atomic broadcast, 2021.

\bibitem{schrijver1998theory}
Alexander Schrijver.
\newblock {\em Theory of linear and integer programming}.
\newblock John Wiley \& Sons, 1998.

\bibitem{xfs}
Adam Sweeney, Doug Doucette, Wei Hu, Curtis Anderson, Mike Nishimoto, and Geoff
  Peck.
\newblock Scalability in the xfs file system.
\newblock In {\em USENIX Annual Technical Conference}, volume~15, 1996.

\bibitem{nearsharding}
NEAR Team.
\newblock Near launches nightshade sharding, paving the way for mass adoption.
\newblock
  \url{https://web.archive.org/web/20221007081239/https://near.org/blog/near-launches-nightshade-sharding-paving-the-way-for-mass-adoption/},
  November 2021.
\newblock Accessed 10/18/2022.

\bibitem{tether}
Tether.
\newblock Tether: Fiat currencies on the bitcoin blockchain.
\newblock
  \url{https://tether.to/wp-content/uploads/2016/06/TetherWhitePaper.pdf}.
\newblock Accessed 12/14/2021.

\bibitem{thomson2012calvin}
Alexander Thomson, Thaddeus Diamond, Shu-Chun Weng, Kun Ren, Philip Shao, and
  Daniel~J Abadi.
\newblock Calvin: fast distributed transactions for partitioned database
  systems.
\newblock In {\em Proceedings of the 2012 ACM SIGMOD International Conference
  on Management of Data}, pages 1--12, 2012.

\bibitem{tu2013speedy}
Stephen Tu, Wenting Zheng, Eddie Kohler, Barbara Liskov, and Samuel Madden.
\newblock Speedy transactions in multicore in-memory databases.
\newblock In {\em Proceedings of the Twenty-Fourth ACM Symposium on Operating
  Systems Principles}, pages 18--32, 2013.

\bibitem{wang2020prism}
Gerui Wang, Shuo Wang, Vivek Bagaria, David Tse, and Pramod Viswanath.
\newblock Prism removes consensus bottleneck for smart contracts.
\newblock In {\em 2020 Crypto Valley Conference on Blockchain Technology
  (CVCBT)}, pages 68--77. IEEE, 2020.

\bibitem{warren20170x}
Will Warren and Amir Bandeali.
\newblock 0x: An open protocol for decentralized exchange on the ethereum
  blockchain.
\newblock 2017.

\bibitem{polkadot}
Gavin Wood.
\newblock Polkadot: Vision for a heterogeneous multi-chain framework.
\newblock {\em White Paper}, 21, 2016.

\bibitem{xia2019litm}
Yu~Xia, Xiangyao Yu, William Moses, Julian Shun, and Srinivas Devadas.
\newblock Litm: a lightweight deterministic software transactional memory
  system.
\newblock In {\em Proceedings of the 10th International Workshop on Programming
  Models and Applications for Multicores and Manycores}, pages 1--10, 2019.

\bibitem{sealevel}
Anatoly Yakovenko.
\newblock Sealevel: Parallel processing thousands of smart contracts.
\newblock
  \url{https://web.archive.org/web/20220124143042/https://medium.com/solana-labs/sealevel-parallel-processing-thousands-of-smart-contracts-d814b378192}.
\newblock Accessed 12/6/2021.

\bibitem{yakovenko2018solana}
Anatoly Yakovenko.
\newblock Solana: A new architecture for a high performance blockchain v0.8.13.
\newblock {\em Whitepaper}, 2018.

\bibitem{yang2022dispersed}
Lei Yang, Seo~Jin Park, Mohammad Alizadeh, Sreeram Kannan, and David Tse.
\newblock {DispersedLedger}: {High-Throughput} byzantine consensus on variable
  bandwidth networks.
\newblock In {\em 19th USENIX Symposium on Networked Systems Design and
  Implementation (NSDI 22)}, pages 493--512, Renton, WA, April 2022. USENIX
  Association.

\bibitem{ye2008path}
Yinyu Ye.
\newblock A path to the {Arrow}--{Debreu} competitive market equilibrium.
\newblock {\em Mathematical Programming}, 111(1-2):315--348, 2008.

\bibitem{yin:hotstuff}
Maofan Yin, Dahlia Malkhi, Michael~K. Reiter, Guy~Golan Gueta, and Ittai
  Abraham.
\newblock Hotstuff: Bft consensus with linearity and responsiveness.
\newblock In {\em Proceedings of the 2019 ACM Symposium on Principles of
  Distributed Computing}, PODC '19, page 347–356, New York, NY, USA, 2019.
  Association for Computing Machinery.

\bibitem{yu2018parallel}
Wei Yu, Kan Luo, Yi~Ding, Guang You, and Kai Hu.
\newblock A parallel smart contract model.
\newblock In {\em Proceedings of the 2018 International Conference on Machine
  Learning and Machine Intelligence}, pages 72--77, 2018.

\bibitem{zhang2022flash}
Haoqian Zhang, Louis-Henri Merino, Vero Estrada-Galinanes, and Bryan Ford.
\newblock Flash freezing flash boys: Countering blockchain front-running.
\newblock In {\em The Workshop on Decentralized Internet, Networks, Protocols,
  and Systems (DINPS)}, 2022.

\bibitem{skychain}
Jianting Zhang, Zicong Hong, Xiaoyu Qiu, Yufeng Zhan, Song Guo, and Wuhui Chen.
\newblock Skychain: A deep reinforcement learning-empowered dynamic blockchain
  sharding system.
\newblock In {\em 49th International Conference on Parallel Processing-ICPP},
  pages 1--11, 2020.

\bibitem{zhang2020byzantine}
Yunhao Zhang, Srinath Setty, Qi~Chen, Lidong Zhou, and Lorenzo Alvisi.
\newblock Byzantine ordered consensus without byzantine oligarchy.
\newblock In {\em 14th $\{$USENIX$\}$ Symposium on Operating Systems Design and
  Implementation ($\{$OSDI$\}$ 20)}, pages 633--649, 2020.

\end{thebibliography}
\newpage
~
\newpage
\begin{appendices}
\section{Mathematical Model Underlying \SPEEDEX{}}
\label{sec:speedexandad}

Mathematically, \SPEEDEX{} relies on a correspondence between a batch of trade offers
and an instance of a linear Arrow-Debreu Exchange Market \cite{arrow1954existence}.  Specifically, \SPEEDEX{}'s batch computation
is equivalent to the problem of computing equilibria in these markets.

\subsection{Arrow-Debreu Exchange Markets}
\label{sec:arrowdebreu}

The Arrow-Debreu Exchange Market is a classic model from the economics and theoretical computer science literature.
Conceptually, there exists in this market a set of independent \textit{agents}, each with its own \textit{endowment} of goods.  Each agent
has some set of preferences over possible collections of goods.  These goods are tradeable on an open market, and
agents, all at the same time, make any set of trades that they wish
with \textit{the market} (or \emph{auctioneer}), not directly with each other.

\begin{definition}[Arrow-Debreu Exchange Market]
An Arrow-Debreu Exchange Market consists of a set of goods $\assetset$ and a set of agents $j\in \lbrace 1,...,M\rbrace $.  
Every agent $j$ has a utility function $u_j(\cdot)$ and an endowment $e_j\in \mathbb{R}^{\assetsize}_{\geq 0}$.  

When the market trades at prices $p\in\mathbb{R}^{\assetsize}_{\geq 0}$,
 every agent sells their endowment to the market in exchange for revenue $s_j=p\cdot e_j$, 
 which the agent immediately spends at the market to buy back an optimal bundle of goods $x_j\in \mathbb{R}^{\assetsize}_{\geq 0}$ - that is, 
$x_j=\argmax_{x:\sum{\asset{A}\in\assetset} x_{\asset{A}}p_{\asset{A}}\leq s_j} u_j(x)$.
\end{definition}

There are countless variants on this definition.  Typically the utility functions are assumed to be quasi-convex.  Some variants
include stock dividents,
corporations, production of new goods from existing goods, and multiple trading rounds.  
\SPEEDEX{} uses only the model outlined above---\SPEEDEX{} looks only
at snapshots of the market, i.e., once per block, and computes batch results for each block independently.

One potential objection to the above definition is that it assumes that the 
abstract market has sufficient quantities available so that every agent can
make its preferred trades.  We say that a market is at \textit{equilibrium} when agents can make their preferred trades and the market does not have
a deficit in any good.

\begin{definition}[Market Equilibrium]
	An equilibrium of an Arrow-Debreu market is a set of prices $p$ and an allocation $x_j$ for every agent $j$, such that for all goods $\asset{A}$, 
	$\sum_j e_{\asset{A},j} \geq \sum_j x_{\asset{A},j}$, and $x_j$ is an optimal bundle for agent $j$.  The inequality for asset $\asset{A}$ is tight whenever $p_{\asset{A}}$ is nonzero.
\end{definition}

Note that an equilibrium includes both a set of market prices and a choice of a utility-maximizing set of goods for each agent.
Say, for example, there are two goods $\asset{A}$ and $\asset{B}$, and one unit of each is sold by other agents to the market.  
If two agents are indifferent
to receiving either good, then the equilibrium must specify whether the first receives $\asset{A}$ or $\asset{B}$, 
and vice versa for the second.  It would not be
a market equilibrium for both of these agents to purchase a unit of $\asset{A}$ and no units of $\asset{B}$.

\subsection{From \SPEEDEX{} to Exchange Markets}

\SPEEDEX{} users do not submit abstract utility functions to an abstract market.  However, most natural types of trade offers
can be encoded as a simple utility function.

Specifically, our implementation of \SPEEDEX{} accepts limit sell orders of the following form.

\begin{definition}[Limit Sell Offer]
A {\it Sell Offer} ($\asset{S}$, $\asset{B}$, $e$, $\alpha$) is request to sell $e$ units of good $\asset{S}$
 in exchange for some number $k$ units of good $\asset{B}$, subject to the condition that $k\geq \alpha e$.
\end{definition}

The user who submits this offer implicitly says that they value $k$ units of $\asset{B}$ more than $e$ units of $\asset{S}$ 
if and only if $k\geq \alpha e$.
These preferences are representable as a linear utility function.

\begin{theorem}
\label{thm:linearutils}

Suppose a user submits a sell offer ($\asset{S}$, $\asset{B}$, $e$, $\alpha$).  The optimal behavior of this offer (and the user's implicit preferences)
is equivalent to maximizing the function $u(x_{\asset{S}}, x_{\asset{B}}) = \alpha x_{\asset{B}} + x_{\asset{B}}$ (for $x_{\asset{S}}, x_{\asset{B}}$ amounts of goods $\asset{S}$ and $\asset{B}$).

\end{theorem}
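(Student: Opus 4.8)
The plan is to unwind the Arrow--Debreu optimization problem that the market definition of \S\ref{sec:arrowdebreu} assigns to a single sell offer, and to verify, by cases on the price ratio, that its set of optimal bundles coincides exactly with the set of trades a limit order is permitted to make. First I would encode the offer $(S,B,e,\alpha)$ as an agent whose endowment is $e$ units of $S$ and $0$ of every other good; at a strictly positive price vector $p$ (positivity on the relevant goods being the same kind of technical condition already invoked in the footnote to Theorem~\ref{thm:valsexist}) this agent has revenue $s = p_S e$, and an optimal bundle is any maximizer of $u$ over $\{x \ge 0 : \sum_i p_i x_i \le s\}$. Because $u$ involves only $x_S$ and $x_B$ and is strictly increasing in each, an optimal bundle must assign $0$ to every other good and must exhaust the budget, so the problem collapses to maximizing $u(x_S, x_B)$ subject to $x_S, x_B \ge 0$ and $p_S x_S + p_B x_B = p_S e$.

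Next I would reduce this to a one-parameter affine maximization: writing $t = e - x_S \in [0, e]$ for the amount of $S$ the agent effectively sells, the binding budget forces $x_B = (p_S/p_B)\, t$, so $u$ becomes an affine function of $t$ whose slope has the same sign as the comparison between the realized exchange rate $p_S/p_B$ and the limit price $\alpha$ --- the coefficients in $u$ are precisely what make the slope vanish at $p_S/p_B = \alpha$. Reading off the three regimes then gives the whole statement: when $p_S/p_B > \alpha$ the unique maximizer is $t = e$, i.e.\ sell all $e$ units of $S$ for $(p_S/p_B)e$ units of $B$; when $p_S/p_B < \alpha$ the unique maximizer is $t = 0$, i.e.\ no trade; and when $p_S/p_B = \alpha$ every $t \in [0,e]$ is optimal, i.e.\ any partial fill is allowed. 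This is exactly the behavior a limit order $(S,B,e,\alpha)$ is supposed to exhibit --- compare the demand computed in the worked example of \S\ref{sec:tatonnement} and the fill rules there --- so the offer's feasible behaviors are precisely $\argmax u$ on the budget set. For the ``implicit preferences'' half of the claim I would simply evaluate $u$ on the two extreme bundles (``keep the $e$ units of $S$'' versus ``instead hold $k$ units of $B$''): the comparison recovers exactly the inequality $k \ge \alpha e$ from the definition of a Limit Sell Offer, so $u$ faithfully represents the user's stated preference.

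I do not expect a real obstacle: once the encoding is fixed this is a routine linear-programming argument. The two points that need care are the degenerate price configurations --- a zero price makes the agent's problem either unbounded (free $B$) or its revenue vacuous (worthless $S$), and these are excluded by the usual positivity/normalization conventions on equilibrium prices --- and the boundary case $p_S/p_B = \alpha$, where the optimal bundle is not unique. That non-uniqueness is exactly the feature that lets a limit order sitting at the clearing price be filled partially, so the proof should match the full $\argmax$ set against the set of permitted behaviors, rather than select a single representative bundle and lose the partial-fill outcomes.
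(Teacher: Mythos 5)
Your proposal is correct and follows essentially the same route as the paper's proof: encode the offer as an agent endowed with $e$ units of $S$, maximize the linear utility over the budget line $p_S x_S + p_B x_B \le p_S e$, and read off the three regimes by the sign of the slope, with the tie at the limit price accounting for partial fills. One concrete detail to check when you carry out the slope computation: with the utility as literally printed, $u(x_S,x_B)=\alpha x_B + x_S$, the slope along your parametrization by $t$ is $\alpha\,p_S/p_B - 1$, which vanishes at $p_S/p_B = 1/\alpha$ rather than at $\alpha$; the linear function that actually represents the stated preference $k \ge \alpha e$ (and that the paper's own case analysis implicitly uses) is $u(x_S,x_B)=\alpha x_S + x_B$, so use that corrected coefficient or your claimed indifference point will not come out at $p_S/p_B=\alpha$.
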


\begin{proof}

Such an offer makes no trades if $p_{\asset{S}}/p_{\asset{B}} <\alpha$ and trades in full if $p_{\asset{S}}/p_{\asset{B}}>\alpha$.

The user starts with $k$ units of $\asset{S}$.  In the exchange market model,
the user can trade these $k$ units of $\asset{S}$ in exchange for any quantities $x_{\asset{S}}$ of $\asset{S}$ 
and $x_{\asset{B}}$ of ${\asset{B}}$,
subject to the constraint that $p_{\asset{S}}x_{\asset{S}} + p_{\asset{B}}x_{\asset{B}} \leq kp_{\asset{S}}$.

The function $u(x_{\asset{S}}, x_{\asset{B}}) = \alpha x_{\asset{B}} + x_{\asset{S}}$ is maximized, subject to the above constraint,
 by $(x_B, x_{\asset{S}})=(0, k)$ precisely when $p_{\asset{S}}/p_B <\alpha$ and by 
 $(x_{\asset{B}}, x_{\asset{S}})=(kp_{\asset{S}}/p_{\asset{B}}, 0)$ otherwise
 (and by any convex combination of the two when $p_{\asset{S}}/p_{\asset{B}} = \alpha$).  These allocations correspond exactly
 to the optimal behavior of a limit sell offer.
 \end{proof}

Note that these utility functions have nonzero marginal utility for only two types of assets, and are not arbitrary linear utilities.
Ramseyer et al.~\cite{ramseyer2022batch} find anecdotal evidence that this subclass of utility functions may be analytically more tractable than
the case of general linear utilities.

\subsection{Existence of Unique* Equilibrium Prices}
\label{sec:uniquevalsexist}

\begin{theorem}
All of the market instances which \SPEEDEX{} considers contain an equilibrium with nonzero prices.

\end{theorem}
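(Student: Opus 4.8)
The plan is to realize each \SPEEDEX{} batch as a linear Arrow--Debreu exchange market and then invoke the classical equilibrium existence theorem for such markets, after a short reduction that disposes of the degenerate cases. By Theorem~\ref{thm:linearutils}, the offer $(S_j,B_j,e_j,\alpha_j)$ is precisely the agent with endowment $e_j$ units of $S_j$ and utility $u_j(x)=\alpha_j x_{B_j}+x_{S_j}$; hence a batch is literally a linear exchange market, and it suffices to show such a market admits an equilibrium in which every price is strictly positive.

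First I would remove degenerate goods: an asset that appears in no offer is neither owned nor desired by any agent, so assigning it price $1$ and deleting it changes no agent's optimal bundle and no clearing condition. After this step every remaining good is desired by some agent (a sell offer has positive marginal utility for \emph{both} of its assets, coefficients $1$ and $\alpha$), and every agent owns a positive amount of some good ($e_j>0$) --- the usual non-triviality hypotheses. The only hypothesis a \SPEEDEX{} batch need not satisfy is irreducibility: it may be disconnected and may contain ``buy-only'' assets that are never sold and hence have zero total endowment. Disconnection is handled by treating each weakly connected component on its own, since an equilibrium of the batch is obtained by juxtaposing equilibria of the components. Inside a component, I would restrict to the sub-economy on assets with positive total endowment --- the \emph{traded core} --- on which (after collapsing to a strongly connected want-subgraph if needed) the classical theorem~\cite{devanur2016rational} produces strictly positive prices $p$ and a clearing allocation. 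It then remains to price the buy-only assets: for such an asset $B$, no units exist, so equilibrium forces every offer $(S,B,e,\alpha)$ to abstain, i.e. $p_B\ge p_S/\alpha$; there are finitely many such constraints and the $p_S$ are already fixed and finite, so a sufficiently large finite $p_B>0$ meets them all, and $B$ clears as $0=0$.

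The main obstacle is making this last step rigorous once buy-only assets may themselves be bought with other buy-only assets: one must check that the inequalities $\phi_S-\phi_B\le\log\alpha$ (with $\phi_A=\log p_A$) are simultaneously feasible over all offers into the buy-only part. This is exactly a shortest-path / Gale--Farkas feasibility question, and it fails only if some directed cycle of offers among buy-only assets has $\prod\alpha<1$; but such a cycle is a family of offers that can all trade fully at once, contradicting those assets never being supplied. Combining the degenerate-goods reduction, the component decomposition, the classical existence theorem on each traded core, and this pricing of the buy-only assets yields a strictly positive equilibrium for the whole batch, which is the claim.
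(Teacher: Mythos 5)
Your reduction to a linear Arrow--Debreu market via Theorem~\ref{thm:linearutils} matches the paper's first step, but from there the two arguments diverge sharply. The paper's entire proof is two sentences: because an agent derived from a sell offer has marginal utility $1>0$ for the very good it is endowed with (the asset being sold), the instance satisfies condition (*) of Devanur et al.\ \cite{devanur2016rational} --- ``every agent has positive utility for some good it owns'' --- and Theorem~1 of that paper then gives an equilibrium with all prices nonzero \emph{with no connectivity, irreducibility, or positive-supply hypotheses whatsoever}. That is precisely the point of condition (*): it is the weak substitute for Gale-style irreducibility, so none of your preprocessing (deleting unused assets, decomposing into components, carving out a ``traded core,'' separately pricing buy-only assets) is needed. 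You appear to be invoking \cite{devanur2016rational} as if it required a strongly connected, fully supplied economy, which is why your proof is an order of magnitude longer than the paper's.

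Within your own route there is one step that is not actually carried out and is where the real difficulty hides: ``after collapsing to a strongly connected want-subgraph if needed.'' The traded core of a weakly connected component need not be strongly connected in the offer digraph, and you cannot simply ``collapse'' it --- you would have to take the condensation DAG, find an equilibrium on each strongly connected piece, and then rescale prices along a reverse topological order so that no offer crossing between pieces is in the money (i.e.\ $p_S/p_B\le\alpha$ for every cross-piece offer). That is the same Gale--Farkas feasibility argument you deploy for buy-only assets, but you never apply it where it is actually needed. Conversely, the obstacle you flag as the ``main'' one is vacuous: a directed cycle of offers among buy-only assets cannot exist, because the sold asset of any offer has positive total endowment by definition and hence is not buy-only; each buy-only asset therefore only receives offers whose sell-side prices are already fixed, and $p_B=\max p_S/\alpha$ over the finitely many incoming offers suffices immediately. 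So your argument is completable, but the effort is misallocated, and the one-line observation that makes the theorem trivial --- each offer values its own endowment --- is missing.
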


\begin{proof}

All of the utilities of agents derived from limit sell offers are linear (Theorem \ref{thm:linearutils}), and have a nonzero
marginal utility on the good being sold.

This means our market instances trivially satisfy condition (*) of
Devanur et al.~\cite{devanur2016rational}.  Existence
of an equilibrium with nonzero prices follows therefore from Theorem 1 of \cite{devanur2016rational}.
\end{proof}

In fact, all of the equilibria in a market instance contain the same equilibrium prices, unless there are two sets of assets across
which no trading activity occurs.  In such a case, one might be able to uniformly increase or decrease all the prices together on one set of assets, relative
to the other set of assets.

\begin{theorem}
\label{thm:partitioning}
Suppose there are two equilibria $(p,x)$ and $(p^\prime, x^\prime)$ and there exist two assets $\asset{A}$ and $\asset{B}$ for which 
$p_{\asset{A}}/p_{\asset{B}} < p_{\asset{A}}^\prime/p_{\asset{B}}^\prime$.

Then it must be the case that there is a partitioning of the assets $\assetset_1, \assetset_2$ with $A\in\assetset_1,B\in\assetset_2$ 
such that both equilibria include no trading activity across the partition.
\end{theorem}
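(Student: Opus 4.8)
The plan is to work with the ratio vector $\gamma_i := p_i/p'_i$ (we may assume all equilibrium prices are strictly positive; a good with zero price at some equilibrium, or more generally one that is never traded in either equilibrium, is isolated and can be placed on whichever side keeps $A\in\mathcal A$, $B\in\mathcal B$, since it contributes no cross-trade). The hypothesis $p_A/p_B\neq p'_A/p'_B$ says precisely that $\gamma_A\neq\gamma_B$, and after possibly swapping the names of the two equilibria (which replaces $\gamma$ by $1/\gamma$) we may assume $\gamma_A>\gamma_B$. Set $\mathcal A:=\{\,i:\gamma_i\geq\gamma_A\,\}$ and $\mathcal B:=\{\,i:\gamma_i<\gamma_A\,\}$; then $A\in\mathcal A$, $B\in\mathcal B$, and both sides are nonempty. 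The whole task is to show that neither equilibrium trades a good of $\mathcal A$ for a good of $\mathcal B$. I will establish this for $(p,x)$; the argument for $(p',x')$ is the mirror image, obtained by interchanging the two equilibria.

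First, a conservation fact. Every agent is a limit sell offer, so by Theorem~\ref{thm:linearutils} its utility is linear and strictly increasing in the two goods it cares about; hence at any equilibrium each agent spends its entire revenue $p\cdot e_j$, and by Walras' law every positively-priced good clears exactly. Summing budgets over just the agents whose sold good lies in $\mathcal A$: collectively they earn the total value $R_{\mathcal A}$ of all $\mathcal A$-goods and spend all of it, part on $\mathcal B$-goods; meanwhile the $\mathcal A$-supply (also worth $R_{\mathcal A}$, since it clears) is bought partly by these agents and partly by agents endowed in $\mathcal B$. Equating yields the balance identity: at $(p,x)$ the value sold from $\mathcal A$ into $\mathcal B$ equals the value sold from $\mathcal B$ into $\mathcal A$; call this common ``cross-flow'' $F$ (and $F'$ for $(p',x')$). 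It suffices to prove $F=0$.

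The crux is a max-bang-per-buck (MBB) monotonicity lemma. In a linear market an agent spends money only on the goods maximizing $c_{aj}/p_j$. Claim: if agent $a$ buys a positive amount of some good $g\in\mathcal A$ at prices $p$, then at prices $p'$ it buys nothing in $\mathcal B$. Indeed, for any $h\in\mathcal B$ the bang-per-buck of $h$ relative to that of $g$ gets multiplied by $\gamma_h/\gamma_g$ in passing from $p$ to $p'$, and $\gamma_h<\gamma_A\leq\gamma_g$, so $h$ strictly loses to $g$ at $p'$ and cannot be MBB there. Since each agent has only two relevant goods (its sell-good and its buy-good), this forces a strong structural consequence: every agent that buys anything from $\mathcal A$ at $p$ must, at $p'$, spend its whole revenue inside $\mathcal A$. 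The agents \emph{not} covered by the claim are exactly those whose entire $p$-trade stays inside $\mathcal B$, together with those that at $p$ fully convert an $\mathcal A$-good into a $\mathcal B$-good.

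Finally, accounting and contradiction. Suppose $F>0$. Measuring everything in $p'$-units, the $\mathcal A$-supply value at $p'$ is at least the $p'$-value of the endowments of all covered agents, which equals $R_{\mathcal A}^{\,p'}$ minus the $p'$-value of the endowments of the full $\mathcal A\!\to\!\mathcal B$ converters at $p$, plus the $p'$-value of the endowments of the $\mathcal B\!\to\!\mathcal A$ converters at $p$; rearranging, the $p'$-measured volume of $\mathcal A\!\to\!\mathcal B$ cross-conversion at $p$ is at least that of $\mathcal B\!\to\!\mathcal A$ cross-conversion at $p$. On the other hand, each unit of an $\mathcal A$-good carries $p$-value at least $\gamma_A$ times its $p'$-value, and each unit of a $\mathcal B$-good carries $p$-value strictly less than $\gamma_A$ times its $p'$-value; combined with the balance identity $F_{\mathcal A\to\mathcal B}=F_{\mathcal B\to\mathcal A}$ (and $F>0$, so the $\mathcal B\!\to\!\mathcal A$ volume is genuinely positive) this yields the \emph{opposite} strict inequality between the same two $p'$-measured volumes, a contradiction. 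Hence $F=0$, and symmetrically $F'=0$, which is the claim. The main obstacle is exactly this last bookkeeping: correctly identifying which agents the MBB lemma does and does not constrain, and handling the (at most boundary-priced) partially filled offers carefully enough that the inequality is strict; getting the direction of the MBB monotonicity lemma right is the other place where a sign slip would be fatal.
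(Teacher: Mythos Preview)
Your argument is correct and rests on the same two pillars as the paper's proof: the partition is a level set of the ratio vector $\gamma_i = p_i/p'_i$, and the contradiction comes from the monotonicity of limit-sell behavior as relative prices shift across the cut. The execution differs, though. The paper works directly with the aggregate supply functions $z_{ab}(p)$: for a cross-edge $(c,d)$ with $c$ on the low-$\gamma$ side, both $x_{cd}\le x'_{cd}$ and $x_{dc}\ge x'_{dc}$ follow immediately from supply monotonicity, and after a harmless price rescaling this yields a term-by-term comparison
\[
0=\sum_{(c,d)} \bigl(p_cx_{cd}-p_dx_{dc}\bigr) \;\le\; \sum_{(c,d)} \bigl(p'_cx'_{cd}-p'_dx'_{dc}\bigr)=0,
\]
strict whenever any cross-trade occurs; both sides vanish by the balance identity, so cross-trade is impossible. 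Your route instead phrases monotonicity as an agent-level max-bang-per-buck lemma, pushes it through market clearing at $p'$ to obtain one inequality on $p'$-measured cross-volumes, and then uses the balance identity at $p$ together with the $\gamma$-bounds to obtain the opposite strict inequality. This is a valid detour and arguably more conceptual (the MBB lemma would generalize beyond two-good offers), but it is longer, and your ``rearranging'' step silently uses that the endowment value of the full $\mathcal A\!\to\!\mathcal B$ converters \emph{lower}-bounds the actual $\mathcal A\!\to\!\mathcal B$ cross-volume while the endowment value of the $\mathcal B\!\to\!\mathcal A$ traders \emph{upper}-bounds the $\mathcal B\!\to\!\mathcal A$ cross-volume. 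Both bounds are true and in the right direction for the contradiction, but they deserve to be stated; the paper's direct supply-function comparison avoids this extra layer entirely. Your mirror argument for $(p',x')$ also works with the \emph{same} partition (since $\mathcal B$ is exactly the high-$\gamma'$ side for $\gamma'=1/\gamma$), which is what the theorem needs.
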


\begin{proof}

Consider the set of offers trading from $\asset{A}$ to $\asset{B}$.  Let $Z_{\asset{A},\asset{B}}(r)$ be the set of amounts of asset $\asset{A}$
that may be sold (when every agent receives an optimal bundle) by these offers to the market at an exchange rate $r=p_{\asset{A}}/p_{\asset{B}}$.
Observe that if $r_1 < r_2$, then every $z_1\in Z_{\asset{A},\asset{B}}(r_1)$ is no more than than any $z_2\in Z_{\asset{A},\asset{B}}(r_2)$ (as sell offers
always prefer higher exchange rates).

At the equilibrium $(p,x)$, let $z_{\asset{A}, \asset{B}}$ be the total amount of $\asset{A}$ sold for $\asset{B}$ for every asset pair
(and $z^\prime_{\asset{A}, \asset{B}}$ similarly for $(p^\prime,
x^\prime)$).  
Note that $z_{\asset{A}, \asset{B}} \in Z_{\asset{A},\asset{B}}(p_{\asset{A}}/p_{\asset{B}})$.

Suppose that there exists a pair of assets $\asset{A},\asset{B}$ as in the theorem statement.  Then there exists a set of assets $\assetset_1$
such that for every asset pair $\asset{C}\in\assetset_1$ and $\asset{D}\notin \assetset_1$, 
$p_{\asset{C}}/p_{\asset{D}} < p_{\asset{C}}^\prime/p_{\asset{D}}^\prime$.

For each of these asset pairs, we must have that $z_{\asset{C},\asset{D}} \leq z^\prime_{\asset{C},\asset{D}}$,
$z_{\asset{D},\asset{C}} \geq z^\prime_{\asset{D},\asset{C}}$,
and $\frac{p_{\asset{C}}}{p_{\asset{D}}} z_{\asset{C},\asset{D}} \leq \frac{p^\prime_{\asset{C}}}{p^\prime_{\asset{D}}} z^\prime_{\asset{C},\asset{D}}$.
Combining these equations gives
\begin{equation*}
p_{\asset{C}}z_{\asset{C},\asset{D}} - p_{\asset{D}}z_{\asset{D},\asset{C}} 
\leq (p_{\asset{C}}^\prime z^\prime_{\asset{C},\asset{D}} - p_{\asset{D}}^\prime z_{\asset{D},\asset{C}}^\prime) p_{\asset{D}}/p_{\asset{D}}^\prime
\end{equation*}
Each of these inequalities is tight if and only if $z_{\asset{C},\asset{D}} = 0$.

It is without loss of generality to rescale $p^\prime$ so that $p_{\asset{D}}/p_{\asset{D}}^\prime < 1$ for all $\asset{D}\notin\assetset_1$.
Thus,
\begin{equation*}
p_{\asset{C}}z_{\asset{C},\asset{D}} - p_{\asset{D}}z_{\asset{D},\asset{C}} 
\leq (p_{\asset{C}}^\prime z^\prime_{\asset{C},\asset{D}} -
p_{\asset{D}}^\prime z_{\asset{D},\asset{C}}^\prime)
\end{equation*}
Because $(p,x)$ and $(p^\prime, x^\prime)$ are equilibria, we must have that
\begin{dmath*}
0=\sum_{\asset{C}\in\assetset_1} \sum_{\asset{D}\notin\assetset_1} p_{\asset{C}}z_{\asset{C},\asset{D}} - p_{\asset{D}}z_{\asset{D},\asset{C}}\\
\leq \sum_{\asset{C}\in\assetset_1} \sum_{\asset{D}\notin\assetset_1} 
p^\prime_{\asset{C}}z^\prime_{\asset{C},\asset{D}} - p^\prime_{\asset{D}}z^\prime_{\asset{D},\asset{C}}
\end{dmath*}
But the second inequality is tight only if each $z_{\asset{C},\asset{D}}=0$.

Hence, $(p^\prime, x^\prime)$ can only be an equilibrium if there exists a partitioning of the assets that separates $\asset{A}$ and $\asset{B}$,
and for which there is no trading activity between the sets in either equilibrium.
\end{proof}

\begin{corollary}
\label{corollary:connected}

Let $(p,x)$ be an equilibrium.

Construct an undirected graph $G=(V,E)$ with one vertex for each asset,
and an edge $e=(\asset{A},\asset{B})\in E$ if, at equilibrium, any $\asset{A}$ is sold for $\asset{B}$ or any $\asset{B}$ is sold for $\asset{A}$.

If $G$ is connected, then the market equilibrium prices $p$ are unique (up to uniform rescaling).

\end{corollary}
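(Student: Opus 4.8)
The plan is to derive the corollary directly from Theorem~\ref{thm:partitioning} by contraposition, so the argument is essentially a repackaging of that theorem. First I would suppose, for contradiction, that $G$ is connected but the equilibrium prices are \emph{not} unique up to uniform rescaling. Translating this hypothesis: there is another equilibrium $(p^\prime,x^\prime)$ whose price vector is not a positive scalar multiple of $p$, which is equivalent to the existence of two assets $A$ and $B$ with $p_A/p_B \neq p_A^\prime/p_B^\prime$ (if every pair had equal ratios, the two price vectors would differ by a single global factor).

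Next I would invoke Theorem~\ref{thm:partitioning} on this pair $(A,B)$. It produces a partition of the asset set into blocks $\mathcal{A}$ and $\mathcal{B}$ with $A\in\mathcal{A}$, $B\in\mathcal{B}$, such that \emph{both} equilibria have no trading activity across the partition. The key observation is then to specialize this to the equilibrium $(p,x)$ used to build $G$: no trading across the partition means $x_{CD}=0$ for every $C\in\mathcal{A}$, $D\in\mathcal{B}$ (and symmetrically), so by the very definition of the edge set of $G$, there is no edge of $G$ joining $\mathcal{A}$ to $\mathcal{B}$. Since $A\in\mathcal{A}$ and $B\in\mathcal{B}$ both blocks are nonempty, so $G$ is disconnected, contradicting the hypothesis.

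Therefore no such pair $(A,B)$ can exist: in every equilibrium, each pair of assets has exactly the price ratio it has in $(p,x)$. Fixing the price of a single asset then determines the prices of all the others, so the equilibrium price vector is unique up to a global positive rescaling, which is the claim. I do not anticipate a genuine obstacle here; the only steps that merit a sentence of care are (i) the reformulation of ``price vector unique up to rescaling'' as ``all pairwise price ratios coincide,'' and (ii) noting that the partition handed back by Theorem~\ref{thm:partitioning} automatically has both blocks nonempty because it separates $A$ from $B$.
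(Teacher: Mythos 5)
Your proposal is correct and is essentially the paper's own argument: both proceed by contraposition, using Theorem~\ref{thm:partitioning} to extract a partition with no cross-trading from any pair of assets whose price ratio differs between the two equilibria, which would disconnect $G$. The extra care you flag (rephrasing ``unique up to rescaling'' as ``all pairwise ratios agree,'' and noting both blocks of the partition are nonempty) is exactly the right bookkeeping and matches what the paper leaves implicit.
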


\begin{proof}

If the theorem hypothesis holds, then for any other equilibrium $(p^\prime, x^\prime)$, it must be the case that
for every asset pair $(\asset{A},\asset{B})$, $p_{\asset{A}}/p_{\asset{B}}=p_{\asset{A}}^\prime/p_{\asset{B}}^\prime$. 
By Theorem \ref{thm:partitioning}, if this did not hold,
then there would exist a partitioning of $V$ into two sets of assets,
across which there is no trading at equilibrium $(p,x)$ (contradicting the assumption that $G$ is connected).
\end{proof}

\section{Approximation Error}
\label{apx:error}

\SPEEDEX{} measures two forms of approximation error:  first, every trade is charged a $\varepsilon$ transaction
commision, and second, some offers with in-the-money limit prices might not be able to be executed (while preserving asset conservation).
Formally, the output of the batch price computation is a price $p_{\asset{A}}$ on each asset $\asset{A}$, and a trade amount $x_{\asset{A},\asset{B}}$ denoting the amount of $\asset{A}$
sold in exchange for $\asset{B}$.  

Formally, we say that the result of a batch price computation is $(\varepsilon, \mu)$-approximate if:
\begin{enumerate}
	\item[1] 

		Asset conservation is preserved with an $\varepsilon$ commission.  The amount of ${\asset{A}}$ sold to the auctioneer,
		$\Sigma_{\asset{B}} x_{\asset{A},\asset{B}}$, must exceed the amount of ${\asset{A}}$ bought from the auctioneer,
		$\Sigma_{\asset{B}} (1-\varepsilon)\frac{p_{\asset{B}}}{p_{\asset{A}}}x_{\asset{B},\asset{A}}$.

	\item[2]
		No offer trades outside of its limit price.  That is to say, an offer selling ${\asset{A}}$ for ${\asset{B}}$ with a limit price of $r$
		cannot execute if $\frac{p_{\asset{A}}}{p_{\asset{B}}} < r$.

	\item[3]
		No offer with a limit price ``far'' from the batch exchange rate does not trade.
		That is to say, an offer selling ${\asset{A}}$ for ${\asset{B}}$ with a limit price of $r$ must trade in full if
		$r < (1-\mu)\frac{p_{\asset{A}}}{p_{\asset{B}}}$.

		Intuitively, the lower the limit price, the more an offer prefers trading to not trading.

\end{enumerate}

This notion of approximation is closely related to but not exactly the same as 
notions of approximation used in the theoretical literature on 
Arrow-Debreu exchange markets (e.g.,~\cite{codenotti2005market}, Definition 1).
In particular, we find it valuable in \SPEEDEX{} to distinguish between the two types of approximation error
(and measure each separately) and \SPEEDEX{} must maintain certain guarantees exactly
(e.g., assets must be conserved,
and no offer can trade outside its limit price).

\section{\Tat{} Modifications}
\label{apx:tatmods}

\subsection{Price Update Rule}
\label{sec:priceupdate}

\let\realSigma\Sigma
\let\Sigma\sum

One significant algorithmic difference between the \Tat{} implemented within \SPEEDEX{}
and the \Tat{} described in Codenotti et al.~\cite{codenotti2005market} is the method in which \Tat{} adjusts 
prices in response to a demand query.  Codenotti et al.\ use an additive rule that they find amenable to
theoretical analysis.  If $Z(p)$ is the market demand at prices $p$, they update prices according to the following rule:
\begin{equation}
p_{\asset{A}} \leftarrow p_{\asset{A}}  + Z_{\asset{A}} (p)\delta
\end{equation}
for some constant $\delta$.  The authors show that there is a sufficiently small $\delta$ so that \Tat{} is guaranteed
to move closer to an equilibrium after each step.

The relevant constant is unfortunately far too small to be usable in practice, and more generally, we want an algorithm
that can quickly adapt to a wide variety of market conditions (not one that always proceeds at a slow pace).

First, we update prices multiplicatively, rather than additively.  This dramatically reduces the number of required rounds, especially
when \Tat{} starts at prices that are far from the clearing prices.
\begin{equation}
p_{\asset{A}}  \leftarrow p_{\asset{A}}  (1 + Z_{\asset{A}} (p)\delta)
\end{equation}

Second, we normalize asset amounts by asset prices, so that our algorithm will be invariant to redenominating an asset.
It is equivalent to trade 100 pennies or 1 USD, and our algorithm performs better when it can take that kind of context
into account.
\begin{equation}
p_{\asset{A}}  \leftarrow p_{\asset{A}}  (1 + p_{\asset{A}} Z_{\asset{A}} (p)\delta)
\end{equation}

Next, we make $\delta$ a variable factor.  
We use a heuristic 
to guide the dynamic adjustment.  
Our experiments used the $l^2$ norm of the
price-normalized demand vector, $\Sigma_{\asset{A}} (p_{\asset{A}} Z_{\asset{A}} (p))^2$; other natural
heuristics (i.e. other $l^p$ norms) perform comparably (albeit not quite as well).
In every round, \Tat{} computes this heuristic at its current set of candidate prices, and
at the prices to which it would move should it take a step with the current step size.  
If the heuristic goes
down, \Tat{} makes the step and increases the step size, and otherwise decreases the step size.
This is akin to a backtracking line search \cite{armijo1966minimization,boyd2004convex} with a weakened termination condition.
\begin{equation}
p_{\asset{A}}  \leftarrow p_{\asset{A}}  (1 + p_{\asset{A}} Z_{\asset{A}} (p)\delta_t)
\end{equation}

Finally, we normalize adjustments by a trade volume factor $\nu_{\asset{A}}$.
Without this adjustment factor, computing prices when one asset is traded much less
than another asset takes a large number of rounds, simply because the lesser traded asset's price updates are always
of a lower magnitude than those of the more traded asset.  
Many other numerical optimization problems run most quickly
when gradients are normalized (e.g., see~\cite{benzi2002preconditioning}).

$\nu_{\asset{A}} $ need not be perfectly accurate---indeed, knowing
the factor exactly would require first computing clearing
prices---but we can estimate it well enough from 
the trading volume in prior blocks and from trading volume
in earlier rounds of \Tat{} (specifically, we use the
minimum of the amount of an asset sold to the auctioneer and the amount bought from the auctioneer).
Real-world deployments could estimate these factors using external market data.

Putting everything together gives the following update rule:
\begin{equation}
p_{\asset{A}} \leftarrow p_{\asset{A}} \left(1+p_{\asset{A}}
Z_{\asset{A}} (p)\delta_t\nu_{\asset{A}}\right)
\end{equation}
The step size is represented internally as a 64-bit integer and a
constant scaling factor.  As mentioned in \S \ref{sec:multiinstance}, 
we run several copies of \Tat{} in parallel with
different scaling factors and different volume normalization
strategies and take whichever finishes first as the result.

\subsubsection{Heuristic Choice}

A natural question is why do we use the seemingly theoretically unfounded
$l^2$ norm of the demand vector as our line-search heuristic.
A typical line search in an optimization context uses
the convex objective function of the optimization problem (e.g., \cite{boyd2004convex}).
Devanur et al.~\cite{devanur2016rational} even give a convex objective function
for computing exchange market equilibria, which we reproduce below (in a simplified
form):

\begin{equation}
\Sigma_{i:mp_i < \frac{p_{\asset{S}_i}}{p_{\asset{B}_i}}}
p_{\asset{S}_i}E_i\ln\left(mp_i
\frac{p_{\asset{S}_i}}{p_{\asset{B}_i}}\right) - y_i\ln (mp_i)
\end{equation}
for $mp_i$ the minimum limit price of an offer $i$ that sells $E_i$ units of good $\asset{S}_i$ and buys good $\asset{B}_i$,
and $y_i=x_ip_{\asset{S}_i}$ for $x_i$ the amount of $\asset{S}_i$ sold by the offer to the market.

This objective is accompanied by an asset conservation constraint for each asset $\asset{A}$:
\begin{equation}
\Sigma_{i:\asset{S}_i=\asset{A}} y_i = \Sigma_{i:\asset{B}_i=\asset{A}} y_i
\end{equation}
However, unlike the problem formulation in \cite{devanur2016rational}, \Tat{} does not have decision
variables $\lbrace y_i\rbrace$. Rather, \Tat{} pretends offers respond rationally
to market prices, and then adjusts prices so that constraints become satisfied.
As such, mapping our algorithms onto the above formulation would mean that $y_i=p_{\asset{S}_i}E_i$
if $mp_i < \frac{p_{\asset{S}_i}}{p_{\asset{B}_i}}$ and $0$ otherwise (although \S \ref{apx:demandsmoothing} would slightly change
this picture).
This would make the objective universally $0$, and thus not useful.

We could incorporate the constraints into the objective by using the Lagrangian of the above problem,
which gives the objective
\begin{equation}
\label{eqn:l1}
\Sigma_{\asset{A}}\lambda_{\asset{A}}(\Sigma_{i:\asset{S}_i=\asset{A}} y_i(p) - \Sigma_{i:\asset{B}_i=\asset{A}} y_i(p))
\end{equation}
for a set of langrange multipliers $\lbrace \lambda_{\asset{A}}\rbrace$.  

We write $y_i(p)$ to
denote that in this formulation, offer behavior is directly a function of prices.  It appears
difficult to use equation \ref{eqn:l1} directly as an objective to minimize, as
it is nonconvex and the gradients of the functions $y_i(\cdot)$ are numerically unstable
(even with the application of \S \ref{apx:demandsmoothing}).

However, observe that equation \ref{eqn:l1} is another way of writing
``the $l^1$ norm of the net demand vector'' (weighted
by the lagrange multipliers).  
We use the $l^2$ norm instead of the $l^1$ to sidestep the need to actually solve for these multipliers.

An observant reader might notice that the derivative of Equation \ref{eqn:l1} with respect to $\lambda_{\asset{A}}$ is
the amount by which (the additive version of) \Tat{} updates $p_{\asset{A}}$.  This might suggest using $p_{\asset{A}}$ in place of $\lambda_{\asset{A}}$ in
equation \ref{eqn:l1}.  However, that search heuristic performs extremely poorly.

\subsection{Demand Smoothing}
\label{apx:demandsmoothing}

Observe that the demand of a single offer is a (discontinuous) step function;
an offer trades in full when the market exchange rate exceeds its limit price,
and not at all when the market rate is less than its limit price.

These discontinuities are difficult for \Tat. (Analogously, many optimization
problems struggle on nondifferentiable objective functions.)  As such, 
we approximate the behavior of each offer with a continuous function.

Recall that \S \ref{apx:error} measures one form of approximation error
(using the parameter $\mu$) which asks how closely realized offer behavior
matches optimal offer behavior.  Specifically, \SPEEDEX{} wants to maintain
the guarantee that every offer (selling $\asset{A}$ for $\asset{B}$) with a limit price
below $(1-\mu) \frac{p_{\asset{A}}}{p_{\asset{B}}}$ trades in full, and those with limit prices
above $\frac{p_{\asset{A}}}{p_{\asset{B}}}$ trade not at all.

As such, \SPEEDEX{} has the flexibility to specify offer behavior
on the gap between $(1-\mu)\frac{p_{\asset{A}}}{p_{\asset{B}}}$ and $\frac{p_{\asset{A}}}{p_{\asset{B}}}$.
Instead of a step function, \SPEEDEX{} linearly interpolates across the gap.
That is to say, if $\alpha=\frac{p_{\asset{A}}}{p_{\asset{B}}}$,
we say that an offer with limit price $(1-\mu)\alpha \leq \beta\leq \alpha$ sells
an $\frac{\alpha-\beta}{\mu\alpha}$ fraction of its assets.

Observe that as $\mu$ gets increasingly small, this linear interpolation becomes
an increasingly close approximation of a step function.  This explains some of the behavior
in Figure \ref{fig:mintxs}, particularly why the price computation problem gets increasingly difficult
as $\mu$ decreases.

\subsection{Periodic Feasibility Queries}

\Tat{}'s linear interpolation simplifies computing each round, but also
restricts the range of prices that meet the approximation criteria, as
it does not capitalize on the flexibility we have in handling offers
within $\mu$ of the market price.  As a result, \Tat{} may arrive at
adequate prices without recognizing that fact.
To identify good valuations, \SPEEDEX{}
runs the more expensive linear program every 1,000 iterations of \Tat{}.

\section{Linear Program}
\label{apx:lp}

Recall that the role of the linear program in \SPEEDEX{} is to compute the maximum amount
of trading activity possible at a given set of prices.  That is to say, \Tat{} first computes
an approximate set of market clearing prices, and then \SPEEDEX{} runs this linear program 
taking the output of \Tat{} as a set of input, constant parameters.

Throughout the following, we denote the price of an asset $\asset{A}$ (as output from \Tat{}) as $p_{\asset{A}}$,
and the amount of $\asset{A}$ sold in exchange for $\asset{B}$ as $x_{\asset{A},\asset{B}}$.  We will also denote the two forms of
approximation error as $\varepsilon$ and $\mu$, as defined in \S \ref{apx:error}.

To maintain asset conservation, the linear program must satisfy the following constraint for every asset $\asset{A}$:

\begin{equation*}
\Sigma_{\asset{B}} x_{\asset{A},\asset{B}} \geq \Sigma_{\asset{B}}
(1-\varepsilon)\frac{p_{\asset{B}}}{p_{\asset{A}}}x_{\asset{B},\asset{A}}
\end{equation*}

Define $U_{\asset{A},\asset{B}}$ to be the upper bound on the amount of $\asset{A}$ that is available for sale
by all offers with in the money limit prices (i.e., limit prices at or
below $\frac{p_{\asset{A}}}{p_{\asset{B}}}$),
and define $L_{\asset{A},\asset{B}}$ to be the lower bound on the amount of $\asset{A}$ that must be exchanged for $\asset{B}$
if \SPEEDEX{} is to be $\mu$-approximate (i.e., execute all offers
with minimum prices at or below
$(1-\mu)\frac{p_{\asset{A}}}{p_{\asset{B}}}$, as described in \S \ref{apx:error}).

Then the linear program must also satisfy the constraint, for every asset pair $(\asset{A},\asset{B})$,

\begin{equation*}
L_{\asset{A},\asset{B}}\leq x_{\asset{A},\asset{B}} \leq U_{\asset{A},\asset{B}}
\end{equation*}

Informally, the goal of our linear program is to maximize the total amount of trading activity.
Any measurement of trading activity needs to be invariant to redenominating assets; intuitively,
it is the same to trade 1 USD or 100 pennies.  As such, the objective of our linear program is:

\begin{equation*}
\Sigma_{\asset{A},\asset{B}} p_{\asset{A}} x_{\asset{A},\asset{B}}
\end{equation*}

Putting this all together gives the following linear program (let $\assetset$ be the set of all assets):

\begin{align}
\max ~& \Sigma_{\asset{A},\asset{B}} p_{\asset{A}} x_{\asset{A},\asset{B}}\\
s.t. ~& p_{\asset{A}} L_{\asset{A},\asset{B}}\leq p_{\asset{A}} x_{\asset{A},\asset{B}}
\leq p_{\asset{A}} U_{\asset{A},\asset{B}}(p)~~~\forall \asset{A},\asset{B}\in\assetset,~(\asset{A}\neq \asset{B})\\
~& p_{\asset{A}}\Sigma_{\asset{B}\in \assetset}x_{\asset{A},\asset{B}} 
\geq (1-\varepsilon)\Sigma_{\asset{B}\in\assetset}p_{\asset{B}} x_{\asset{B},\asset{A}}~~~ \forall \asset{A}\in\assetset
\end{align}

From the point of view of the linear program, $p_{\asset{A}}$ is a constant (for each asset $\asset{A}$).
As such, this optimization problem is in fact a linear program.

It is possible that \Tat{} could output prices where this linear program is infeasible (this is the
case of the \Tat{} timeout, as discussed in \S \ref{sec:tateval}).  In these cases,
we set the lower bound on each $x_{\asset{A},\asset{B}}$ to be $0$ instead of $L_{\asset{A},\asset{B}}$.  This change makes the program
always feasible (e.g., an assigment of each variable to $0$ satisfies the constraints).

Observe that as written, every instance of the variable $x_{\asset{A},\asset{B}}$ appears adjacent to $p_{\asset{A}}$.
We can simplify the program by replacing each occurrence of $p_{\asset{A}} x_{\asset{A},\asset{B}}$ by a new variable $y_{\asset{A},\asset{B}}$.
After solving the program, we can compute $x_{\asset{A},\asset{B}}$ as $\frac{y_{\asset{A},\asset{B}}}{p_{\asset{A}}}$.

This substitution gives the following linear program:

\begin{align}
\max ~& \Sigma_{\asset{A},\asset{B}} y_{\asset{A},\asset{B}}\\
s.t. ~& p_{\asset{A}} L_{\asset{A},\asset{B}}\leq y_{\asset{A},\asset{B}}\leq p_{\asset{A}} U_{\asset{A},\asset{B}}(p)~~~\forall (\asset{A},\asset{B}),~(\asset{A}\neq \asset{B})\\
~& \Sigma_{\asset{B}\in\assetset}y_{\asset{A},\asset{B}} \geq (1-\varepsilon)\Sigma_{\asset{B}\in\assetset}y_{\asset{A},\asset{B}}~~~ \forall \asset{A}
\end{align}

The Stellar implementation charges no transaction commission (i.e.,
sets $\varepsilon$ to $0$) in its \SPEEDEX{} deployment.
This makes the linear program into an instance of the maximum
circulation problem (i.e.,
variable $y_{\asset{A},\asset{B}}$ denotes the flow from vertex $\asset{A}$ to vertex $\asset{B}$).
It is well known that the constraint matrices of these problems are totally unimodular (Chapter 19, Example 4 \cite{schrijver1998theory}).
This means that it always has an integral solution (Theorem 19.1, \cite{schrijver1998theory}) 
and can be solved by specialized algorithms 
(such as those outlined in \cite{kiraly2012efficient}).
Some of these algorithms run substantially faster than general simplex-based solvers.

\section{Market Structure Decomposition}
\label{sec:decomposition}

Suppose that the set of goods could be partitioned between a set of numeraires,
which might be traded with any other asset, and a set of stocks, which are only traded
with one of the pricing assets.

Then \SPEEDEX{} could compute a batch equilibrium by first computing an equilibrium taking into account 
only trades between pricing assets, then computing an equilibrium exchange rate for every stock between
the stock and its pricing asset, and finally combining the results.

More specifically:

\newcommand{\stockset}{\mathfrak{S}}

\begin{theorem}

Let $\assetset$ be the set of numeraires and $\stockset$ the set of stocks.  A stock $\asset{S}\in\stockset$ is traded with 
asset $a(\asset{S})\in \assetset$.

Suppose $(p,x)$ is an equilibrium for the restricted market instance considering only the numeraires.
For each $\asset{S}\in \stockset$, let $(r, y)$ be an equilibrium for the restricted market instance considering
only $\asset{S}$ and $a(\asset{S})$.

Then $(p^\prime, x^\prime)$ is an equilibrium for the entire market instance,
where 
\begin{enumerate}
\item $p^\prime_{\asset{A}}=p_{\asset{A}}$ for $\asset{A}\in \assetset$
\item $p^\prime_{\asset{S}}=\left(r_{\asset{S}}/r_{a(\asset{S})}\right)p_{a(\asset{S})}$
\item $x^\prime_{\asset{A},\asset{B}}=x_{\asset{A},\asset{B}}$ for $\asset{A},\asset{B}\in \assetset$
\item $x^\prime_{\asset{S}, a(\asset{S})}=y_{\asset{S}, a(\asset{S})}$
\item $x^\prime=0$ otherwise
\end{enumerate}

\end{theorem}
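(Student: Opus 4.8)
The plan is to verify the two defining conditions of a market equilibrium (Definition: Market Equilibrium) for the candidate $(p', x')$: (1) no asset is in deficit, i.e.\ for every good $i$, $\Sigma_j e_{ij} \geq \Sigma_j x'_{ij}$; and (2) every agent's allocation is an optimal bundle at prices $p'$. Since agents here come from limit sell offers, by Theorem~\ref{thm:linearutils} optimality reduces to checking each offer's behavior against the induced exchange rate $p'_S/p'_B$: an offer selling $S$ for $B$ with limit price $\alpha$ must trade fully if $p'_S/p'_B > \alpha$ and not at all if $p'_S/p'_B < \alpha$. I would organize the argument by the kind of offer: pure pricing-asset offers (both $S,B \in A$), stock offers (selling $s$ for $a(s)$, or selling $a(s)$ for $s$), and any offer involving two stocks or a stock and a non-adjacent pricing asset.

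The key observation driving everything is that the exchange rates are preserved. For $a,b \in A$ we have $p'_a/p'_b = p_a/p_b$ by construction, so pricing-asset offers see exactly the rates they saw in the restricted pricing-asset equilibrium $(p,x)$; since $x'_{ab}=x_{ab}$ there, their behavior is optimal and asset conservation among pricing assets is inherited directly from $(p,x)$ (with the one bookkeeping point that the restricted instance's conservation must be re-examined because stock offers also consume $a(s)$ — see below). For a stock $s$, $p'_s/p'_{a(s)} = r_s/r_{a(s)}$ by construction, which is precisely the rate in the restricted $\{s,a(s)\}$ equilibrium $(r,y)$; so offers trading between $s$ and $a(s)$ behave optimally, with $x'_{s,a(s)}=y_{s,a(s)}$ and the reverse direction handled symmetrically from $(r,y)$. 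For an offer selling stock $s_1$ for stock $s_2$ (or for a pricing asset $b \neq a(s_1)$), we set $x'=0$; we must argue this is optimal, i.e.\ that such an offer genuinely does not want to trade. Here I would invoke the hypothesis that stocks are only traded with their own pricing asset — i.e.\ such cross-stock offers are assumed not to exist in the instance, or (if they may exist) that the restricted-market setup already priced assets so that these offers are out of the money. I expect the cleanest route is to adopt the former reading: the theorem's standing assumption is that the offer set contains no such offers, so condition (2) is vacuous for them and condition (1) is trivially $0 \le 0$ for the goods that only appear there.

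The main obstacle is asset conservation at the pricing asset $a(s)$, which is touched by two different restricted equilibria: the pricing-asset equilibrium $(p,x)$ (trades among $A$) and each stock equilibrium $(r,y)$ for stocks $s$ with $a(s)$ as counterparty. I would handle this by being careful about what ``the restricted market instance considering only the pricing assets'' contains: it should include the offers trading among pricing assets together with the offers trading $a(s)$-for-$s$ pooled appropriately, or equivalently the decomposition must be set up so that the $a(s)$-side supply/demand from stock offers is exactly accounted for in $(p,x)$. Concretely, I would show that summing the $a(s)$-conservation inequality from $(p,x)$ with the $a(s)$-side of each $\{s,a(s)\}$-conservation inequality from $(r,y)$ yields $\Sigma_j e_{j,a(s)} \geq \Sigma_B x'_{a(s),B}$, using that $e_{j,a(s)}$ partitions across the offers appearing in the two restricted instances and that $x'$ restricted to each piece equals the corresponding restricted-equilibrium allocation. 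For every stock $s$, conservation is immediate from $(r,y)$ since $s$ appears only in trades with $a(s)$. Once conservation holds at every good and optimality holds for every offer, $(p',x')$ satisfies both conditions of the equilibrium definition and the theorem follows.
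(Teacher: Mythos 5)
Your proposal is correct, and it rests on the same two facts that make the paper's proof work---exchange rates within each piece are preserved by the price rescaling, and conservation at a shared asset is the sum of the per-piece conservation constraints because every offer lives entirely inside one piece---but it is organized differently. You verify the two conditions of the equilibrium definition directly for the specific star-shaped construction: check each class of offer against its induced rate $p'_S/p'_B$, and sum the $a(s)$-conservation inequalities coming from $(p,x)$ and from each $(r,y)$. The paper instead proves a more general gluing lemma: decompose the trading graph into edge-disjoint subgraphs, any two of which share at most one vertex, and show that if the intersection graph $H$ of these subgraphs is acyclic, then independently computed equilibria can be merged by rescaling each subgraph's prices along a traversal of $H$ so that they agree on shared vertices. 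The stated theorem is the special case in which one subgraph spans the pricing assets and each stock contributes a single edge to its pricing asset, so $H$ is a star and hence acyclic. Your direct verification is more self-contained and makes explicit two points the paper leaves implicit---that optimality of $x'=0$ for cross-stock offers holds vacuously because the hypothesis forbids such offers, and that conservation at $a(s)$ follows by summing the restricted instances' constraints---while the paper's formulation buys generality, covering any tree-like decomposition of the market rather than only the pricing-assets-plus-stocks case. Your concern about whether the pricing-asset instance must ``pool'' the $a(s)$-for-$s$ offers resolves exactly as in your preferred reading: the restricted instances partition the offers edge-by-edge, so no pooling is needed.
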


\begin{proof}

More generally, let $G$ be a graph whose vertices are the traded assets and which contains an edge $(\asset{A},\asset{B})$ if
$\asset{A}$ and $\asset{B}$ can be traded directly.

Decompose $G$ into an arbitrary set of edge-disjoint subgraphs $\lbrace G_i\rbrace$, such that any two subgraphs $G_i,G_j$
share at most one common vertex.  Then define a graph $H$ whose vertices are the subgraphs $G_i$, and where a subgraph $G_i$ is connected
to $G_j$ if $G_i$ and $G_j$ share a common vertex.  

If $H$ is acyclic, then an equilibrium can be reconstructed from equilibria computed independently on each $G_i$.

We reconstruct a unified set of prices iteratively, traversing along $H$.  Given adjacent $G_i$ and $G_j$ sharing common vertex $v_{ij}$,
 let $(p^i, x^i)$ and $(p^j, x^j)$ be
equilibria on $G_i$ and $G_j$, respectively, rescale all of the prices $p^j$ by $p^i_{v_{ij}}/p^j_{v_ij}$.

This rescaling constructs a new equilibria ($p^{j\prime}, x^j$)for $G_j$ that agrees with that of $G_i$ on the price of the shared good. 
As such, the combined system ($p^i\cup p^{j\prime}, x^i\cup x^j$) forms an equilibrium for $G_i\cup G_j$.

This iteration is possible precisely because $H$ is acyclic (a cycle could prevent us from finding a rescaling of some subgraph
that satisfied two constraints on the prices of shared vertices).

\end{proof}

\section{Alternative Batch Solving Strategies}
\label{sec:alternatestrats}

\subsection{Convex Optimization}

We implemented the convex program of Devanur et al. \cite{devanur2016rational} directly,
using the CVXPY toolkit \cite{diamond2016cvxpy} backed by the ECOS convex solver \cite{domahidi2013ecos}.
Figure \ref{fig:cvxpygraph} plots the runtimes we observed to solve the problem while varying the number of assets and offers.

The runtimes are not directly comparable to those of \Tat{}---namely, this strategy does not
have the potential to shortcircuit operation upon early arrival at an equilibrium (our notions of approximation error also do not directly
translate to the notions used interally in the solver), nor is it optimized
for our particular class of problems. 

The important observation is that the runtime of this strategy scales linearly in the number of trade
offers.  Instances trading 1000 offers, for example, take roughtly 10x as long as instances trading only 100 offers.

\begin{figure}
\centering
\includegraphics[width=\columnwidth]{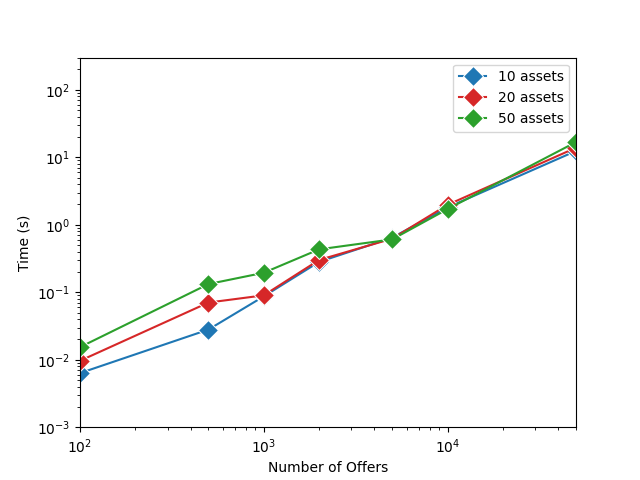}

\caption{
    Time to solve the convex program of Devanur et al. \cite{devanur2016rational} using the CVXPY toolkit \cite{diamond2016cvxpy},
    varying the number of assets and offers.
    \label{fig:cvxpygraph}
}

\end{figure}

This is not a surprising result, given that the number of variables in the convex program
scales linearly with the number of trade offers.

The choice of solver strategy does not, of course, change the structure of the input problem instances.
The same observation used in \S \ref{sec:logtransform} makes it possible to refactor the convex program so that the number of variables
does not depend on the number of open offers, and so that the objective (and its derivatives) can be evaluated in time
logarithmic in the number of open offers.

Unfortunately, this transformation makes the objective nondifferentiable.  The demand smoothing tactic of \S \ref{apx:demandsmoothing}
gives a differentiable but not twice differentiable objective (and presents challenges regarding numerical stability of the derivative).
Construction of a convex objective that approximates that of \cite{devanur2016rational}
while maintaining sufficient smoothness and numerical stability is an interesting open problem.

\subsection{Mixed Integer Programming}
\label{sec:dfusionmip}

Gnosis (Walther, \cite{dfusion}) give several formulations of a batch trading system as
mixed-integer programming problems.  These formulations track token amounts as integers
(instead of as real numbers, as used in \Tat{}'s underlying mathematical formulation),
which enables strict conservation of asset amounts with no rounding error.

However, mixed-integer problems appear to be computationally difficult to solve.
Walther \cite{dfusion} finds that the runtime of this approach scales faster than linearly.
Instances with more than a few hundred assets appear to be intractable for practical systems.

\section{\Tat{} Preprocessing}
\label{sec:tatpreprocessing}

We include this section so that this paper can provide a comprehensive
 reference for anyone to develop their own \Tat{} implementation.

Every demand query in \Tat{} requires computing, for every asset pair,
the amount of the asset available for sale below the queried exchange rate.
As discussed in \S \ref{sec:tatimpldetails}, \Tat{} lays out contiguously in memory
all the information it needs to return this result quickly.

For a version of \Tat{} without the demand smoothing of \S \ref{apx:demandsmoothing},
a demand query for exchange rate $p$ (i.e. the ratio of the price of the sold asset to the price of the purchased asset)
\begin{equation}
\label{eqn:nonsmooth}
\Sigma_{i:mp_i \leq p} E_i
\end{equation}
where $mp_i$ denotes the minimum price of an offer
$i$ and $E_i$ denotes the amount of the asset offered for sale.

We can efficiently answer these queries by computing expression \ref{eqn:nonsmooth}
for every price $p$ used as a limit price

Demand smoothing complicates the picture.  The result of a demand query (with smoothing parameter $\mu$)
\begin{equation}
\label{eqn:smooth}
\Sigma_{i:mp_i < p(1-\mu)} E_i + \Sigma_{i:p(1-\mu) \leq mp_i \leq p} E_i * (p-mp_i)/(p\mu)
\end{equation}

We can rearrange the second term of the summation into
\begin{equation}
1/(p\mu) \Sigma_{i:p(1-\mu) \leq mp_i \leq p} (pE_i - E_imp_i)
\end{equation}

With this, we can efficiently compute the demand query
after precomputing, for every unique price $p$ that is used as a limit price,
both expression \ref{eqn:nonsmooth} and
\begin{equation}
\Sigma_{i:mp_i < p} mp_iE_i
\end{equation}

The division in equation \ref{eqn:smooth} can be avoided by recognizing that
\Tat{} normalizes all asset amounts by asset valuations (so equation \ref{eqn:smooth} is always
multiplied by $p$).

\section{Buy Offers are PPAD-hard}
\label{sec:ppad}

A natural type of trade offer is one that offers to 
sell any number of units of one good to buy a fixed amount of a good (subject
to some minimum price constraint).  We call these \emph{limit buy
offers}.

\begin{example}[Limit Buy Offer]
\label{ex:buyoffer}

A user offers to buy 100 USD in exchange for EUR, selling as few EUR as possible
and only if one EUR trades for at least $1.1$ USD.
\end{example}

These offers unfortunately do not satsify a property known as ``Weak Gross Substitutability'' (WGS,
see
e.g., \cite{codenotti2005market}).  This property captures the core logic of \Tat{}.  If the price of one good
rises, the net demand for that good should fall, and the net demand for every other good should rise (or at least, not
decrease).  Limit sell offers satisfy this property, but limit buy offers do not.

\begin{example}
The demand of the offer in of example \ref{ex:buyoffer}, when
$p_{EUR}=2$ and $p_{USD}=1$, is $(-50~\text{EUR}, 100~\text{USD}).$

If $p_{USD}$ rises to $1.6$, then the demand for the offer is
$(-80~\text{EUR}, 100~\text{USD})$.

Observe that the price of USD rose and the demand for EUR fell.
\end{example}

Informally speaking, if offers do not satisfy the core logic of \Tat{}'s price update rule,
then \Tat{} cannot handle them in a mathematically sound manner.

More formally, Chen et al.~\cite{chen2017complexity} show through Theorem~7 and Example 2.4 that
markets consisting of collections of limit buy offers are PPAD-hard.
These theorems are phrased
in the language of the Arrow-Debreu exchange market model; see \S \ref{sec:speedexandad} for the 
correspondence between \SPEEDEX{} and this model.  In fact, the utility functions
used in Example 2.4 to demonstrate an example ``non-monotone'' (i.e., defying WGS) instance
are of the type that would arise by mapping limit buy offers into the Arrow-Debreu exchange market model.

\section{Deterministic Filtering Performance}

\label{sec:filtering}

\XXX{emphasize that this happens in a single pass, both here and in limitations.tex}

The deterministic transaction batch pruning system works by eliminating 
the transactions from all of the accounts that could create an unresolvable
conflict.  To be specific, if the sum of the amount of an asset used (either sent in a payment option
or locked to create a offer) by all of an account's transactions exceeds that
account's balance, then that account's transactions are removed.
If an account sends two transactions with the same sequence number (both of which have valid signatures,
and the sequence numbers are higher than the sequence number of the account's most recent transaction),
or two transactions cancel the same offer ID,
then that account's transactions are removed.
If two transactions create the same account ID, then both transactions are removed.

We generated batches of 400,000 transactions from the same synthetic transaction model as in \S \ref{sec:scalable},
and then duplicated 100,000 transactions at random to create a batch of 500,000.
A small number of accounts (1000) send transactions with conflicting sequence numbers.
We initialize the database (again, 10 million accounts) to give each account a small amount of money,
and a small number (one or two hundred) of accounts attempt to overdraft.

This filtering takes 0.13s and 0.07s seconds with 24 and 48 threads, respectively (averaged over 50 trials, after a warmup),
 giving a 21.0$\times$ and 38.4$\times$ speedup over the serial benchmark.  On a more contested benchmark, with only 10,000 accounts (almost all of which overdraft)
 the maximum speedup over the single threaded trial is only 5.3$\times$, but the overall filtering runtime is still just 0.10s.  
 Our implementation of the filtering is not heavily optimized, but in either parameter setting, the overhead is small.

\section{Block-STM Baseline}
\label{sec:bstm_baseline}

To provide a baseline for the measurements in Fig. \ref{fig:blockstm}, we also ran Block-STM
on our hardware (with hyperthreading disabled, as in \cite{gelashvili2022block}).  Fig. \ref{fig:blockstm_perf} displays the results.

\begin{figure}
\centering
\includegraphics[width=\columnwidth]{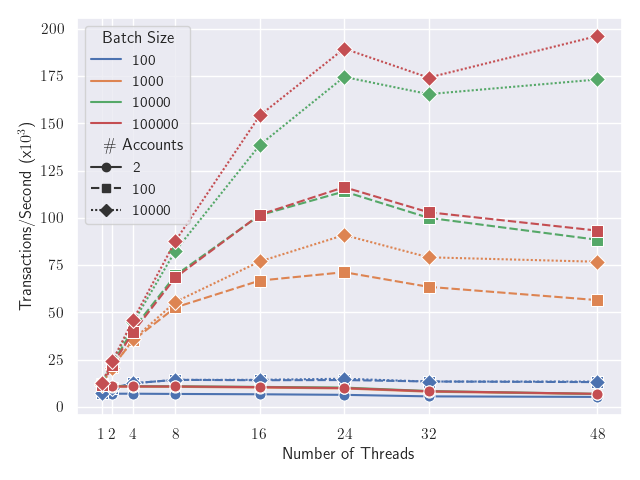}
\caption{
    Throughput of Block-STM on batches of ``Aptos p2p'' transactions with varying thread counts
 (average of 100 trials).
    \label{fig:blockstm_perf}
}
\end{figure}

These performance measurements are similar, quantitatively and qualitatively, to those reported in \cite{gelashvili2022block} (on different hardware).
Note that performance appears to reach a maximum after approximately
16 to 24 threads, and, unlike \SPEEDEX{}, does not effectively use additional hardware beyond this point,
even on relatively low-contention workloads.

\section{Additional Implementation Details}
\label{sec:additional_implementation}

\subsection{Data Organization}
\label{sec:dataorg}

Account balances are stored in a Merkle-Patricia trie.  However, because a trie is not self-rebalancing, its worst-case adversarial lookup performance can be slow.  
As such, we store account balances in memory indexed by a red-black tree, with updates pushed to the trie once per block.

For each pair of assets ($\asset{A}$, $\asset{B}$), we build a trie
storing offers selling asset $\asset{A}$ in exchange for $\asset{B}$\@.  
Finally, in each block, we build a trie logging which accounts were modified.

We store information in hashable tries so that nodes can efficiently
compare their database state with another replica's
(to validate consensus and check for errors), and construct short proofs for users about exchange state.


\subsection{Data Storage and Persistence}
\label{sec:persistence}

\SPEEDEX{} uses a combination of an in-memory cache and
ACID-compliant databases (several LMDB\cite{lmdb} instances).  This choice suffices
for our experiments, but a database that persists data in epochs, like
Silo \cite{tu2013speedy}, or is otherwise optimized
for batch operation
might improve performance.

Our implementation uses one LMDB instance for the set of open offers, one instance for Hotstuff logs,
one instance for storing block headers, and 16 instances for storing account states.  LMDB is single-threaded, and we find that
the throughput of one thread generating database writes does not keep up with \SPEEDEX{}.  Accounts are randomly divided between these instances, 
according to a hash function keyed by a (persistent) secret key (which
is different per blockchain node).  This key must be kept secret so as to prevent
nodes from denial of service attacks.

Processing transactions in a nondeterministic order complicates recovery from a database snapshot where a block has been partially applied.
Cancellation transactions, in particular, refund to an account the remainder of an offer's asset amount.  We therefore cannot recover
if the snapshot of the orderbooks is more recent than the snapshot of the set of account balances, and our implementation takes care
to commit updates to the account LMDB instances before committing
updates to the orderbook LMDB\@.

\subsection{Follower Optimizations}
\label{sec:header}

A block proposal includes the output of \Tat{} and the linear program in 
(the prices and trade amounts, as in \S \ref{sec:theorytopractice}).
This permits the nondeterminism in \Tat{} (\S \ref{sec:multiinstance}),
and lets the other nodes skip the work of running \Tat{}.

Proposals also include, for every pair of assets, the trie key
of the offer with the highest minimum price that trades in that block.  
When executing a proposal from another node,
a follower can compare the trie key of a newly created
offer with this marginal key and know immediately whether to make a trade
or add the offer to the resting orderbooks.  A node also
defers all checks that an account balance is not overdrafted to after it has 
executed all the transactions in a block.

\subsection{Replay Prevention}
\label{sec:replay}

Transactions have per-account sequence numbers to ensure a transaction
can execute only once.  Many blockchains require sequence numbers from
an account to increase strictly sequentially.  Our implementation
allows small gaps in sequence numbers, but restricts sequence numbers to
increase by at most an arbitrary limit (64) in a given block.  
Allowing gaps simplifies
some clients (such as our open-loop load generator), but more
importantly lets validators efficiently track consumed sequence
numbers out of order with a fixed-size bitmap and hardware atomics.

The Stellar implementation requires strictly consecutive sequence
numbers, mostly for backwards compatibility.

\subsection{Fast Offer Sorting}
\label{sec:fastsorting}
The running times of \S \ref{sec:tateval} do not include times to
sort or preprocess offers.  Na\"ively sorting large lists takes a long
time.  Therefore, we build one trie storing offers per asset pair, 
and we use an offer's price, written in big-endian, as
the first 6 bytes of the offer's 22-byte trie key.
Constructing the trie thus automatically sorts offers by price.

Additionally, \SPEEDEX{} executes offers with the lowest minimum
prices, so a set of offers executed in a round forms a dense (set of) subtrie(s), 
which is trivial to remove.

\subsection{Nondeterministic Block Assembly}
\label{sec:nd_assembly}

As discussed in \S \ref{sec:commutativesemantics}, \SPEEDEX{} must
assemble blocks of transactions in a manner that guarantees no account
is overdrafted after applying all of the transactions in the block.
The block proposal system (Fig. \ref{fig:diagram}, 2) manages this by
carefully controlling writes to shared state.

The proposal module takes as input a set of unconfirmed transactions
(the ``mempool'', in typical blockchain parlance) and outputs a
proposed block containing a subset of the unconfirmed transactions.
For each candidate unconfirmed transaction, a thread reserves the
ability to perform all necessary modifications by ``locking'' all
relevant data elements.  Once a transaction acquires all of its locks,
it performs its necessary state modifications and finally releases
the locks.
If it cannot acquire all necessary locks, it
releases any locks and excludes the transaction from the proposed
block.

Conceptually, a transaction offering a trade or sending a payment must
lock the number of units of assets that could be debited from the
account if the operation succeeds.  However, doing this with spinlocks
would preclude the scalability displayed in Figure~\ref{fig:blockstm}.
Instead, most reservations are performed with hardware atomics to
decrement the number of available units.  
Crediting an account can never fail because \SPEEDEX{} caps
the total amount of any asset issued at \texttt{INT64\_MAX}.
This process is conservative
in that it may reject transactions that could have executed
safely.


Unique offer IDs ensure that no offer is created twice, and atomic
boolean flags ensure an offer cannot be cancelled twice.  Sequence
numbers can be reserved by atomic bitmaps (as in \S \ref{sec:replay}).
For simplicity, our implementation does use exclusive locks when
creating new accounts (which we assume occurs relatively
infrequently).

\section{Additional Replicas}
\label{sec:10rep}

\SPEEDEX{} invokes a consensus protocol no more than once per second in our experiments. 
To demonstrate that this overhead is negligible, we ran \SPEEDEX{} with $10$ replicas,
although with weaker hardware per replica, due to resource limitations.  Each replica is one
AWS c5ad.16xlarge instance, with one AMD EPYC 7R32 processor (48 CPUs @ 2.8Ghz per physical chip, 32 of which are allocated to our instances),
 128 GB of memory,
and two 1.1TB NVMe drives in a RAID0 configuration.  Performance measurements are plotted
in Figure \ref{fig:10rep}.

\begin{figure}  
\centering
\includegraphics[width=\columnwidth]{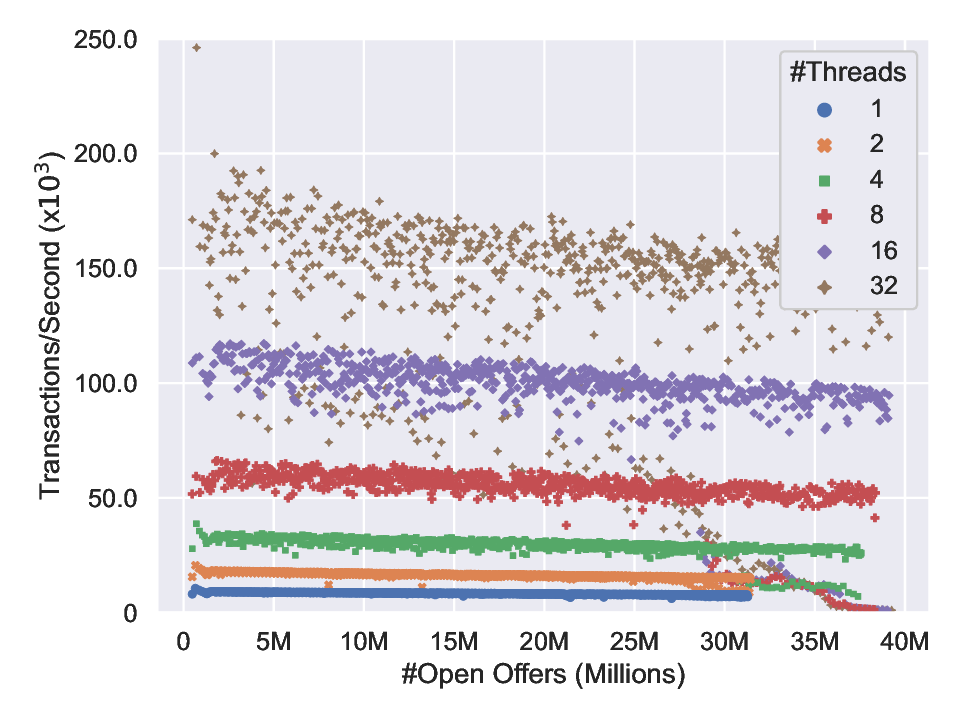}
\caption{
    Transactions per second on \SPEEDEX{} when running with 10 replicas (on weaker hardware than in Fig. \ref{fig:e2e}), 
    plotted over the number of open offers.
    \label{fig:10rep}
}
\end{figure}

The overall throughput numbers are lower here than in Figure \ref{fig:e2e} due to the weaker hardware,
but the scalability trends are the same.  Doubling the thread count increases performance by a factor
of between 1.8x and 1.9x, except that the jump from 16 to 32 gives a roughly 1.4x increase due to contention
with background tasks (particularly logging to persistent storage).

This graph also highlights how \SPEEDEX{} responds to insufficient hardware resources.  As the number of open offers
increases, \SPEEDEX{}'s memory requirements increase.  Eventually, memory starts to be paged to disk,
which dramatically increases disk usage and contends with the logging to persistent storage.
\SPEEDEX{} slows down in response, to ensure for safety that data in peristent storage is never too far out of sync.

\end{appendices}
\end{document}